%
\documentclass[runningheads]{llncs}
\usepackage{graphicx}
%

\usepackage{amsmath}
\usepackage{amssymb}
\usepackage{mathtools}
\usepackage{comment}
\usepackage{subfig}
\usepackage[ruled,vlined]{algorithm2e}
\usepackage{graphics}
\usepackage{wrapfig,booktabs}
\usepackage{tabularx}
\usepackage[utf8]{inputenc}

\begin{document}
	\title{Reducing Boolean Networks with\\ Backward Boolean Equivalence - extended version\thanks{Partially supported by the 
			DFF project REDUCTO 9040-00224B, the Poul Due Jensen Foundation grant 883901, and the PRIN project SEDUCE 2017TWRCNB.
	}}
	\titlerunning{Reducing Boolean Networks with Backward Boolean Equivalence}
	%
	\author{
		Georgios Argyris\inst{1} \orcidID{0000-0002-3203-0410} \and
		Alberto Lluch Lafuente \inst{1} \orcidID{0000-0001-7405-0818}\and 
		Mirco Tribastone\inst{2}\orcidID{0000-0002-6018-5989}\and
		Max Tschaikowski\inst{3}\orcidID{0000-0002-6186-8669}\and
		Andrea Vandin\inst{4,1}\orcidID{0000-0002-2606-7241}
	}
	\authorrunning{G. Argyris, A. Lluch Lafuente, M. Tribastone, M. Tschaikowski, A. Vandin}
	%
	\institute{DTU Technical University of Denmark, Kongens Lyngby, Denmark\and
		IMT School for Advanced Studies Lucca, Italy\and
		University of Aalborg, Denmark\and
		Sant'Anna School for Advanced Studies, Pisa, Italy
	}
	\maketitle              
	\begin{abstract}
		Boolean Networks (BNs) are established models to qualitatively describe  biological systems. The analysis of BNs might be infeasible for medium to large BNs due to the state-space explosion problem. We propose a novel reduction technique called \emph{Backward Boolean Equivalence} (BBE), which preserves some properties of interest of BNs. In particular, reduced BNs provide a compact representation by grouping variables that, if initialized equally, are always updated equally. The resulting reduced state space is a subset of the original one, restricted to identical initialization of grouped variables. The corresponding trajectories of the original BN can be exactly restored. We show the effectiveness of BBE by performing a large-scale validation on the whole GINsim BN repository. In selected cases, we show how our method enables analyses that would be otherwise intractable. Our method complements, and can be combined with, other reduction methods found in the literature.
		\keywords{Boolean Network  \and State Transition Graph \and Attractor Analysis \and Exact Reduction \and GinSim Repository}
	\end{abstract}
	\section{Introduction}
	\setcounter{footnote}{0}

	Boolean Networks (BNs)
	are an established method to model biological systems~\cite{KAUFFMAN1969437}. A BN consists of Boolean variables (
	also called nodes) which represent the activation status of the components in the model.
	The variables are commonly depicted as nodes in a network with directed links which represent influences between them.
	However, a full descriptive mathematical model underlying a BN consists of a set of Boolean functions, the \emph{update functions}, that govern the Boolean values of the variables. Two BNs are displayed on top of Fig.~\ref{fig:image1}. The BN on the left has three variables $x_1$, $x_2$, and $x_3$, and the BN on the right has two variables $x_{1,2}$ and $x_3$. 
	The dynamics (the state space) of a BN is encoded into a \emph{state transition graph}~(STG). The bottom part of Fig.~\ref{fig:image1} displays the STGs of the corresponding BNs.
	The boxes of the STG represent the BN \emph{states}, i.e.  vectors with one Boolean value per BN variable.
	A directed edge among two STG states represents the evolution of the system from the source state to the target one. The target state is obtained by synchronously applying  all the update functions to the activation values of the source state.
	There exist BN variants with other update schema, e.g.  asynchronous non-deterministic~\cite{thomas2013kinetic} or probabilistic~\cite{shmulevich2002probabilistic}. Here we focus on the synchronous case. 
	BNs where variables are \emph{multivalued}, i.e. can take more than two values to express different levels of activation~\cite{thomas1991regulatory}, are supported via the use of  \emph{booleanization} techniques~\cite{delaplace2020bisimilar}, at the cost, however, of increasing the number of variables.
	
	\begin{figure}[ht]
		\centering
		\begin{tabular}{ccc}
			\scalebox{0.85}{
				$
				\begin{array}{rcl}
					x_1(t+1) &=& \neg x_3(t) \vee x_1(t)\\
					x_2(t+1) &=& x_1(t) \vee x_2(t) \vee \neg x_3(t)\\
					x_3(t+1) &=& x_2(t) \wedge \neg x_3(t)
				\end{array}
				$
			}
			\quad
			& 
			$\xRightarrow[\text{$x_1, x_2: BBE$}]{}$
			&
			\scalebox{0.85}{
				$
				\begin{array}{rcl}
					x_{1,2}(t+1) &=& \neg x_3(t) \vee x_{1,2}(t)\\
					x_3(t+1) &=& x_{1,2}(t) \wedge \neg x_3(t)\\
					\phantom{x}
				\end{array}
				$
			}
			\\
			\includegraphics[width=0.29\linewidth]{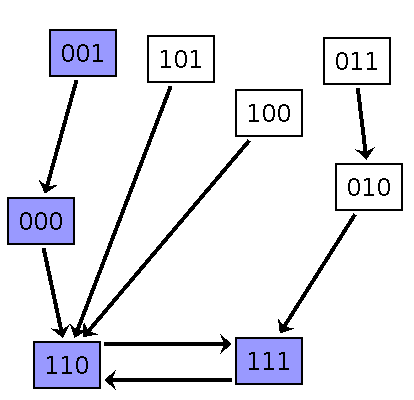}
			&
			&
			\includegraphics[width=0.18\linewidth]{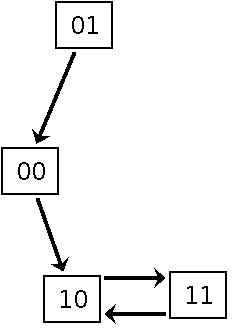}
		\end{tabular}
		\vspace{-0.2cm}
		\caption{A BN (top-left), its STG (bottom-left), the BBE-reduced BN (top-right) and its (reduced) STG (bottom-right).
		}
		\label{fig:image1}
	\end{figure}

	BNs suffer from the state space explosion problem: there are exponentially many STG states with respect to the number of BN variables.
	This hampers BN analysis in practice, calling for reduction techniques for BNs. There exist  manual or semi-automated ones 
	based on domain knowledge. Such empirical reductions have several drawbacks: being semi-automated, they are error-prone, and do not scale. 
	Popular examples are those based on the idea of 
	\emph{variable absorption}, 
	proposed originally in \cite{naldi2011dynamically,veliz2011reduction,saadatpour2013reduction}. The main idea is that certain BN variables can get \emph{absorbed} by the update functions of their target variables 
	by replacing all occurrences of the absorbed variables with their update functions. Other methods  automatically remove \emph{leaf} variables (variables with 0 outgoing links) or \emph{frozen} variables (variables that stabilize after some iterations independently of the initial conditions)~\cite{richardson2005simplifying,bilke2001stability}. 
	Several techniques \cite{figueiredo2016relating,berenguier2013dynamical} focus on reducing the STGs rather than the BN generating them. This requires to construct the original STG, thus still incurring the state space explosion problem.
	
	Our research contributes a novel mathematically grounded method to automatically minimize BNs while exactly preserving behaviors of interest. We present Backward Boolean Equivalence (BBE), which  
	collapses \emph{backward  Boolean equivalent} variables.
	The main intuition is that two BN variables are BBE-equivalent if 
	they maintain equal value in any state reachable from a state wherein they have the same value.
	In the STG in Fig.~\ref{fig:image1} (left), we note that for all 
	states where $\mathit{x_1}$ and $\mathit{x_2}$ have same value (purple boxes), the update functions do not distinguish them. 
	Notably, BBE is that it can be checked directly on the BN, without requiring to generate the STG. Indeed, as depicted in the middle of Fig.~\ref{fig:image1} , $\mathit{x_1}$ and $\mathit{x_2}$ can be shown to be BBE-equivalent by inspecting their update functions: If $x_1,x_2$ have the same value in a state, i.e. $\mathit{x_1(t)=x_2(t)}$, then their update functions will not differentiate them since $x_2(t+1) = x_1(t) \vee x_2(t) \vee \neg x_3(t) =x_1(t) \vee x_1(t) \vee \neg x_3(t)=x_1(t) \vee \neg x_3(t)=x_1(t+1)$.
	We also present an iterative partition refinement algorithm \cite{paige1987three} that computes the largest BBE of a BN. 
	%
	%
	%
	Furthermore, given a BBE, we obtain a \emph{BBE-reduced} BN by collapsing all BBE-equivalent variables into one in the reduced BN. In Fig.~\ref{fig:image1}, we collapsed $\mathit{x_1,x_2}$ into $\mathit{x_{1,2}}$.
	%
	%
	The reduced BN faithfully preserves part of the dynamics of the original BN: it exactly preserves all states and paths of the original STG where BBE-equivalent variables have same activation status. Fig.~\ref{fig:image1} (right) shows the obtained BBE-reduced BN and its STG. We can see that the purple states of the original STG are preserved in the one of the reduced BN.	
	
	We implemented BBE in ERODE~\cite{cardelli2017erode}, a freely available tool for reducing biological systems. 
	We built 
	a toolchain that combines ERODE with several 
	tools for the analysis, visualization and reduction of BNs, allowing us to apply BBE to all BNs from the GINsim repository~(\url{http://ginsim.org/models\_repository}). 
	BBE led to reduction in  61 out of 85 considered models ($70\%$), facilitating STG generation. For two models, we could obtain the STG of the reduced BN while it is not possible to generate the original STG due to its size. 
	We further demonstrate the effectiveness of BBE in three case studies, focusing on their \emph{asymptotic dynamics} by means of \emph{attractors analysis}.  Using BBE, we can identify the attractors of large BNs which would be otherwise intractable. 
	
	The article is organized as follows: Section~\ref{sec2} provides the basic definitions and the running example based on which we will explain the key concepts. In Section~\ref{sec3}~, we introduce BBE, present the algorithm for the automatic computation of maximal BBEs, and formalize how the STGs of the original and the reduced BN are related. In Section~\ref{sec4}, we apply BBE to BNs from the literature. In Section~\ref{sec:related} we discuss related works, while Section~\ref{sec6} concludes the paper. 

	\section{Preliminaries}\label{sec2}
	
	
	BNs can be represented visually using some graphical representation which, however, might not contain all the information about their dynamics~\cite{klamt2009hypergraphs}. An example is that of signed interaction (or regulatory) graphs adopted by the tool GinSim~\cite{naldi2009logical}.
	These representations are often paired with a more precise description containing
	either truth tables \cite{richardson2005simplifying} or algebraic update functions \cite{su2020sequential}. In this paper we focus on such precise representation, and in particular on the latter. 
	However, in order to better guide the reader in the case studies, wherein we manipulate BNs with a very large number of components, we also introduce  signed interaction graphs.
	
	We explain the concepts of current and next sections using the simple BN of Fig.~\ref{fig:image2} (left) taken from \cite{10.1371/journal.pcbi.1000936}. The model refers to the development of the outer part of the brain: the cerebral cortex. This part of the brain contains different areas with specialised functions. The BN is composed of five variables which represent the gradients that take part in its development: the morphogen \emph{Fgf8} and four transcription factors, i.e., \emph{Emx2, Pax6, Coup\_tfi, Sp8}. During development, these genes are expressed in different concentrations across the surface of the cortex forming the different areas. 
	
	\begin{figure}[t]
		\centering
		\scalebox{0.85}{
			$\begin{array}{rcl}
				\mathit{x_{Fgf8}(t\!+\!1)} &=& \mathit{x_{Fgf8}(t) \land \neg x_{Emx2}(t) \land x_{Sp8}(t)}\\
				\mathit{x_{Pax6}(t\!+\!1)} &=& \mathit{\neg x_{Emx2}(t) \land x_{Sp8}(t) \land \neg x_{Coup\_tfi}(t)}\\	
				\mathit{x_{Emx2}(t\!+\!1)} &=& \mathit{\neg x_{Fgf8}(t) \land \neg x_{Pax6}(t) \land \neg x_{Sp8}(t) \land x_{Coup\_tfi}(t)}\\	
				\mathit{x_{Sp8}(t\!+\!1)} &=& \mathit{x_{Fgf8}(t) \land \neg x_{Emx2}(t)}\\	
				\mathit{x_{Coup\_tfi}(t\!+\!1)} &=& \mathit{\neg x_{Fgf8}(t) \land \neg x_{Sp8}(t)}\\
				\\
				\\
				\\
				\\
				\\
				\\
				\\
				\\
				\\
			\end{array}$
		}
		\hspace{-0.45cm}
		\vspace{-2.25cm}
		\includegraphics[width=0.37\linewidth]{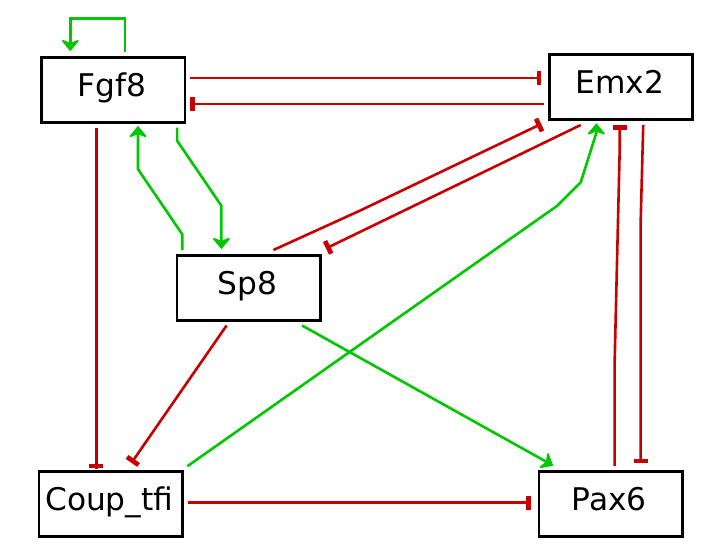}
		\caption{(Left) the BN of cortical area development from~\cite{10.1371/journal.pcbi.1000936}; (Right) its signed interaction graph.}
		\label{fig:image2}
	\end{figure}
	
	Fig.~\ref{fig:image2} (right)
	displays the signed interaction graph that corresponds to the BN. 
	The green arrows correspond to \emph{activations} whereas the red arrows correspond to \emph{inhibitions}.
	For example, the green arrow from $\mathit{Sp8}$ to $\mathit{Pax6}$ denotes that the former promotes the latter because variable $\mathit{x_{Sp8}}$ appears (without negation) in the update function of $\mathit{x_{Pax6}}$, whereas the red arrow from $\mathit{Pax6}$ to $\mathit{Emx2}$ denotes that the former inhibits the latter because the negation of $\mathit{x_{Pax6}}$ appears in the update function of $\mathit{x_{Emx2}}$.
	
	
	We now give the formal definition of a BN:
	
	\begin{definition}
		\label{def:bn}
		A BN is a pair $(X, F)$
		where $X=\{x_1,...,x_n\}$ is a set of variables
		and $F=\{f_{x_1},...,f_{x_n}\}$ is a set of update functions, with $f_{x_i}:\mathbb{B}^n \rightarrow \mathbb{B}$ being the update function of variable $x_i$.
	\end{definition}
	
	A BN is often denoted as $X(t+1)=F(X,t)$, or just
	$X=F(X)$.
	In Fig.~\ref{fig:image2} we have $\mathit{X=\{ x_{Fgf8}, x_{Pax6},}$ $\mathit{x_{Emx2}, x_{Sp8}, x_{Coup\_tfi}\}}$. 

	
	The \emph{state} of a BN is an evaluation of the variables, 
	denoted with the vector of values $\mathit{\textbf{s}=(s_{x_1}, \ldots, s_{x_n}) \in \mathbb{B}^n}$.  The variable $\mathit{x_i}$ has the value $\mathit{s_{x_i}}$.
	When the update functions are applied synchronously, we have synchronous transitions between states, i.e. 
	for $\textbf{s},\textbf{t}\in\mathbb{B}^n$ we have $\textbf{s}\xrightarrow{}\textbf{t}$
	if $\mathit{\textbf{t}=F(\textbf{s})=(f_{x_1}(\textbf{s}), \ldots, f_{x_n}(\textbf{s}))}$.
	
	Suppose that the activation status of the variables  $\mathit{x_{Fgf8}}$, $\mathit{x_{Emx2}}$, $\mathit{x_{Pax6}}$, $\mathit{x_{Sp8}}$, $\mathit{x_{Coup\_tfi}}$
	is given by the state $\textbf{s}=(1,0,1,1,1)$. After applying the update functions, we have $\textbf{t}=F(\textbf{s})=(0,0,0,0,0)$.
	
	The state space of a BN, called \emph{State Transition Graph (STG)}, is the set of all possible states and state transitions. 

	\begin{definition}
		Let $B = (X,F)$ be a BN. We define the state transition graph of $B$, denoted with $STG(B)$, as a pair $(S,T)$ with
		$S \subseteq \mathbb{B}^{n}$
		being a set of vertices labelled with the states of $B$, and $T=\{ \textbf{s}\xrightarrow{}\textbf{t} \mid \textbf{s}\in S, \textbf{t} = F(\textbf{s}) \}$ a set of directed edges representing the transitions between states of $B$.
	\end{definition}
	
	We often use the notation 
	$\textbf{s} \xrightarrow{}^+\textbf{t}$ for the transitive closure of the transition relation. 
	The cardinality of the set of states is $2^n$, which illustrates the state space explosion: we have exponentially many states on BN variables.
	Fig.~\ref{fig:image3}(a) displays the STG of the BN in Fig.~\ref{fig:image2}.
	%
	%

	\begin{figure}[t]
		\subfloat[STG of BN in Fig.~\ref{fig:image2}]{
			\includegraphics[scale=0.44]{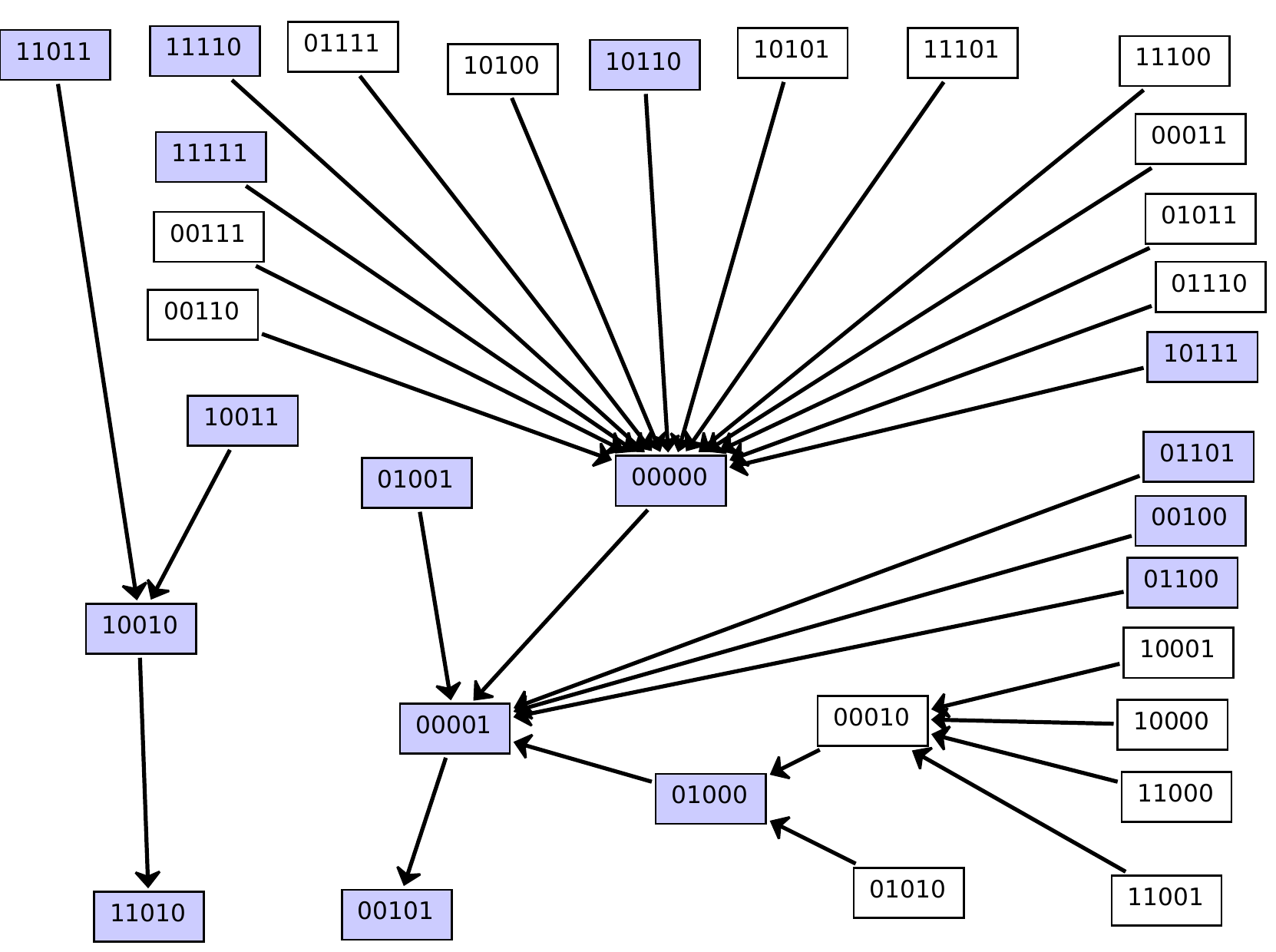}
		}
		\qquad
		\subfloat[STG of 
		BN in Fig.~\ref{fig:reducedCAD}]{
			\includegraphics[scale=0.44]{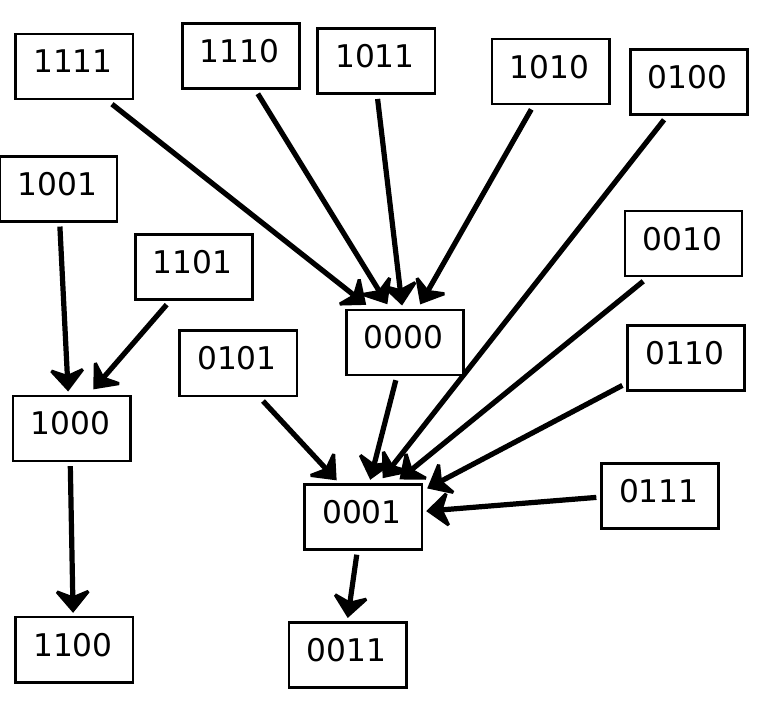}
		}
		\caption{The STGs of the 
			BN of Fig.~\ref{fig:image2} and of its BBE-reduction in. 
			Fig.~\ref{fig:reducedCAD}.
			We use GINsim's visual representation, where self-loops are implicit in nodes without outgoing edges.}
		\label{fig:image3}
	\end{figure}
	
	Several 
	BN properties are identified in STGs, e.g. attractors, basins of attraction, and transient trajectories~\cite{SCHWAB2020571}. 
	Attractors are sets of states towards which a system tends to evolve and remain~\cite{hopfensitz2013attractors}. They are often associated with the interpretation of the underlying system; for example, Kauffman equated attractors with different cell types \cite{drossel2008random}.
	Hence, the main reduction methods that have been developed in the literature so far concentrate on how they affect the asymptotic dynamics i.e. the number of attractors and the distribution of their lengths. We define an attractor as follows: 
	
	\begin{definition}\textbf{(Attractor)}\label{att}
		Let $B = (X,F)$ be a BN with $STG(B) = (S,T)$. We say that a set of states $A \subseteq S$ is an \emph{attractor} iff
		\begin{enumerate}
			\item $\forall \textbf{s},\textbf{s}'\in A$, $\textbf{s}\xrightarrow{}^+ \textbf{s}'$, and
			\item $\forall \textbf{s}\in A, \forall \textbf{s}'\in S$, $\textbf{s}\xrightarrow{}^+ \textbf{s}'$ implies $\textbf{s}'\in A$.
		\end{enumerate}
	\end{definition}
	
	Attractors are hence just absorbing strongly connected components in the STG. An attractor $A$ such that $|A|=1$ is called a \emph{steady state} (also named \emph{point attractor}). We also denote with $|A|$ the \emph{length} of attractor $\mathit{A}$.

	\section{Backward Boolean Equivalence}\label{sec3}
	Our reduction method is based on the notion of backward equivalence, recast for BNs, which proved to be effective for reducing the dimensionality of
	ordinary differential equations \cite{DBLP:conf/popl/CardelliTTV16,DBLP:journals/tcs/CardelliTTV19a} and chemical reaction networks \cite{cardelli2017maximal,DBLP:conf/concur/CardelliTTV15,DBLP:conf/tacas/CardelliTTV16}.
	Section~\ref{sec:bbe} introduces \emph{Backward Boolean Equivalence} (BBE), which is an equivalence relation on the variables of a BN, and use it to obtain a reduced BN. 
	Section~\ref{sec:compute} 
	provides an algorithm which 
	iteratively compute the maximal BBE of a BN.
	Section~\ref{sec3.3} relates the properties of an original and BBE-reduced BN.
	%
	
	We fix a BN $B=(X,F)$, with $|X|=n$. 
	We use $R$ to denote equivalence relations on $X$ and $X_R$ for the induced partition.
	
	\subsection{Backward Boolean Equivalence and BN Reduction}
	
	\label{sec:bbe}

	We first introduce the notion of \emph{constant} state on an equivalence relation $R$.
	
	\begin{definition}\textbf{(Constant State)}\label{defCS}
		A state $\textbf{s} \in \mathbb{B}^n$ is \emph{constant} on $R$ if and only if $\forall (x_i,x_j)\in R$ it holds that $s_{x_i}=s_{x_j}$.
	\end{definition}
	
	Consider our running example and an equivalence relation $R$ given by the partition $X_R = \{ \{ x_{Sp8},x_{Fgf8}\}$, $\mathit{\{x_{Pax6}\}, \{x_{Emx2}\}}$, $\mathit{\{x_{Coup\_tfi}\}\}}$.
	The states constant on $R$ are colored in purple in Fig.~\ref{fig:image3}.
	For example, the state $\textbf{s}=(1,0,1,1,1)$ is constant on $R$  because $\mathit{s_{Sp8}=s_{Fgf8}}$ (the first and fourth positions of $\textbf{s}$, respectively). 
	On the contrary, $(1, 0, 1, 0, 1)$ is not constant on $R$. 
	
	We now define \emph{Backward Boolean Equivalence (BBE)}.
	\begin{definition}\textbf{(Backward Boolean Equivalence)} \label{defBE}
		Let $B=(X,F)$ be a BN, $X_R$ a partition of the set $X$ of variables, and $C \in X_R$ a class of the partition.
		A partition $X_R$ is a \emph{Backward Boolean Equivalence (BBE)} if and only if the following formula is valid:
		\[
		\Phi^{X_R} \equiv
		\left(\operatorname*{\bigwedge}_{\stackrel{C \in X_R}{x,x' \in C}}  \Bigl( x = x' \Bigr) \right)
		\longrightarrow
		\operatorname*{\bigwedge}_{\stackrel{C \in X_R}{x,x' \in C}}  \Bigl( f_x(X) = f_{x'}(X
		) \Bigr)
		\]
	\end{definition}
	
	$\Phi^{X_R}$ says that if for all equivalence classes $C$ the variables in $C$ are equal, then the update functions of variables in the same equivalence class stay equal.
	
	In other words, $R$ is a BBE if and only if for all  $\textbf{s}\in \mathbb{B}^n$ constant on $R$ it holds that $F(\textbf{s})$ is constant on $R$. BBE is a relation where the update functions $F$ preserve the ``constant'' property of states. The partition $X_R = \{ \{ x_{Sp8},x_{Fgf8}\}$, $\mathit{\{x_{Pax6}\}, \{x_{Emx2}\}}$, $\mathit{\{x_{Coup\_tfi}\}\}}$ described above is indeed a BBE. This can be verified on the STG: all purple states (the constant ones) have outgoing transitions only towards purple states.
	
	
	We now define the notion of BN reduced up to a BBE $R$. 
	Each variable in the reduced BN represents one  equivalence class in $R$. We denote by $f\{^a/_b\}$ the term arising by replacing each occurrence of $b$ by $a$ in the 
	function $f$.
	
	\begin{definition}
		The reduction of $B$ up to $R$, denoted by
		$B/R$, is the BN $(X_R, F_R)$ where
		$F_R=\{f_{x_C} : C \in X_R\}$,
		with $f_{x_C}=f_{x_k}\{^{x_{C'}}/_{x_i}: \forall C'\in X_R, \forall x_i \in C'\}$
		for some $x_k\in C$.
	\end{definition}
	
	The definition above uses one variable per equivalence class, selects the update function of any variable in such class, and replaces all variables in it 
	with a representative one per equivalence class. Fig.~\ref{fig:reducedCAD} shows the reduction of the cortical area development BN. We selected the update function of 
	$\mathit{x_{Sp8}}$ as the update function of the class-variable $\mathit{x_{\{{Fgf8},{Sp8}\}}}$, and replaced every occurrence of $\mathit{x_{Sp8}}$ and $\mathit{x_{Fgf8}}$ with 
	$\mathit{x_{\{Fgf8,Sp8\}}}$.
	%
	The STG of such reduced BN is given in  Fig.~\ref{fig:image3}(b).
	
	\begin{figure}[h]
		\centering
		\vspace{-0.5cm}
		\scalebox{0.85}
		{%
			$
			\begin{array}{rcl}
				x_{\{\mathit{Fgf8,Sp8}\}}(t+1) &=& x_{\{\mathit{Fgf8,Sp8}\}}(t) \land \neg x_{\{Emx2\}}(t) \\
				x_{\{\mathit{Pax6}\}}(t+1) &=& \neg x_{\{\mathit{Emx2}\}}(t) \land x_{\{\mathit{Fgf8,Sp8}\}}(t) \land \neg x_{\{\mathit{Coup\_tfi}\}}(t)\\	
				x_{\{\mathit{Emx2}\}}(t+1) &=& \neg x_{\{\mathit{Fgf8,Sp8}\}}(t) \land \neg x_{\{\mathit{Pax6}\}}(t) \land \neg x_{\{\mathit{Fgf8,Sp8}\}}(t) \land x_{\{\mathit{Coup\_tfi}\}}(t) \\		
				x_{\{\mathit{Coup\_tfi}\}}(t+1) &=& \neg x_{\{\mathit{Fgf8,Sp8}\}}(t) \land \neg x_{\{\mathit{Fgf8,Sp8}\}}(t)\\
			\end{array}
			$
		}
		\caption{The BBE-reducion of 
			the cortical area development network of Fig.~\ref{fig:image2}.}
		\label{fig:reducedCAD}
	\end{figure}

	\subsection{Computation of the maximal BBE}\label{sec:compute}
	
	A crucial aspect of BBE is that it can be checked directly on a BN without requiring the generation of the STG. This is feasible by encoding the logical formula of Definition~\ref{defBE} into a logical SATisfiability problem~\cite{10.5555/1550723}. A SAT solver 
	has the ability to check the validity of such a logical formula by checking for the unsatisfiability of its negation ($sat(\neg \Phi^{X_R})$). A partition $X_R$ is a BBE if and only if sat($\neg \Phi^{X_R}$) returns ``unsatifiable'', otherwise a counterexample (a witness) is returned, consisting of variables assignments that falsify $\Phi^{X_R}$. Using counterexamples, it is possible to develop a partition refinement algorithm that computes the largest BBE that refines an initial partition. 
	
	The partition refinement algorithm is shown in Algorithm~\ref{algorithm}. Its input are a BN and an initial partition of its variables $X$. A \emph{default} initial partition that leads to the maximal reduction consists of one block only, containing all variables. In general, the modeller may specify a different initial partition if some variables should not be merged together, placing them in different blocks. The output of the algorithm is the largest partition that is a BBE and refines the initial one.
	
	\begin{algorithm}[htp]
		\SetAlgoLined
		\KwResult{maximal BBE $H$ that refines $X_R$}
		H $\leftarrow$ $X_R$\;
		\While{true}{
			
			\eIf{ $\Phi^{H}$ is valid}{
				return H \;
			}{
				$\mathbf{s} \leftarrow$ get a state that satisfy $\neg \Phi^{H}$\;
				$H^\prime \leftarrow \emptyset$\;
				\For{$ C \in H$}
				{$C_0 = \{x_i \in C : f_{x_i}(\mathbf{s}) = 0\}$\;
					$C_1 = \{x_i \in C : f_{x_i}(\mathbf{s}) = 1\}$\;
					$H' = H' \cup \{C_1\} \cup \{C_0\}$\; 
				}
				
				$H \leftarrow H' \setminus \{\emptyset\}$\; 
			}
		}
		\caption{Compute the maximal BBE that refines the initial partition $X_R$ for a BN $(X,F)$}
		\label{algorithm}
	\end{algorithm}
	
	We now explain how the algorithm works for  input the cortical area development BN and the 
	initial partition $\mathit{X_R=\{\{x_{Fgf8},x_{Emx2},}$ $\mathit{x_{Pax6},}$ $\mathit{x_{Sp8},x_{Coup\_tfi}\}\}}$.
	
	\paragraph{Iteration 1.}The algorithm enters the \emph{while} loop, and the solver 
	checks if 
	$\mathit{\Phi^{X_R}}$ is valid. 
	$X_R$ is not a BBE, therefore 
	the algorithm enters the second branch of the \emph{if} statement. The solver gives an example satisfying $\mathit{\neg \Phi^{X_R}}$:  $\mathit{s=(s_{x_{Fgf8}},s_{x_{Pax6}},s_{x_{Emx2}}}$, $\mathit{s_{x_{Sp8}},}$ $\mathit{s_{x_{Coup\_tfi}})=}(0,0,0,0,0)$. Since $\mathit{t}$ $\mathit{=F(s)}$  $=(0,0,$ $0,0,1)$, the \emph{for} loop partitions $\mathit{G}$ into $X_{R_1}=\mathit{\{\{x_{Fgf8}},$ $\mathit{x_{Pax6}}$, $\mathit{x_{Emx2}}$ $\mathit{x_{Sp8}\},\{x_{Coup\_tfi}\}\}}$. The state $t=(0,0,0,0,1)$ is now constant on $\mathit{X_{R_1}}$.
	
	\paragraph{Iteration 2.} The algorithm checks if $\mathit{\Phi^{X_{R_1}}}$ is valid (i.e. if $X_{R_1}$ is a BBE). $\mathit{X_{R_1}}$ is not a BBE. The algorithm gives a counterexample with   $s=(0,0,0,0,1)$ and $t=F(s)=(0,0,1,0,1)$. The \emph{for} loop refines $\mathit{X_{R_1}}$ into $X_{R_2}=\mathit{\{\{x_{Fgf8},x_{Pax6}}$ $\mathit{x_{Sp8}\},\{x_{Emx2}\},\{x_{Coup\_tfi}\}\}}$. $\mathit{X_{R_2}}$ 
	makes  $t=(0,0,1,0,1)$ constant.
	
	\paragraph{Iteration 3.}The algorithm checks if $\mathit{G_2}$ is a BBE. The formula $\neg \mathit{\Phi^{X_{R_2}}}$ is satisfiable,  so $\mathit{G_2}$ is not a BBE, and the solver 
	provides an example with  $s=(1,1,0,1,1)$ and $F(s)=(1,0,0,1,0)$. Hence, $\mathit{X_{R_2}}$ is partitioned into $X_{R_3}=\mathit{\{\{x_{Fgf8},}$ $\mathit{x_{Sp8}\},}$ $\mathit{\{x_{Pax6}\}}$ $\mathit{\{x_{Emx2}\},\{x_{Coup\_tfi}\}\}}$.
	
	\paragraph{Iteration 4.} The SAT solver proves that $\mathit{\Phi^{X_{R_3}}}$ is valid.
	
	The number of iterations needed to reach a BBE depends on the 
	counterexamples that the SAT solver provides. 
	As for all partition-refinement algorithms, it can be easily shown that the number of iterations is bound by the number of variables. Each iteration requires to solve a SAT problem which is known to be NP-complete, however we show in Section~\ref{sec4} that we can easily scale to the largest models present in popular BN repositories.


	We first show that given an initial partition there exists exactly one \emph{largest} BBE that refines it.
	~\footnote{All proofs are given in Appendix~\ref{sec:proofs}}

	After that, we prove that Algorithm~\ref{algorithm} indeed provides the maximal BBE that refines the initial one.
	
	\begin{theorem}\label{th:maximal}
		Let $BN=(X,F)$ and $X_{R}$ a partition. There exists a unique maximal BBE $H$ that refines $X_{R}$.
	\end{theorem}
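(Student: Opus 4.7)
The plan is to realize the maximal BBE refining $X_R$ as the join, in the refinement lattice of partitions of $X$, of all BBEs that refine $X_R$. Since $X$ is finite, so is this lattice, so it is enough to establish two things about the family $\mathcal{B}=\{H : H \text{ is a BBE and } H \text{ refines } X_R\}$: that it is non-empty and that it is closed under binary joins. The coarsest element of $\mathcal{B}$ then exists as the join of all its (finitely many) members, and uniqueness is immediate, because if $H_1,H_2$ were both maximal then $H_1\vee H_2\in\mathcal{B}$ would be a common coarsening, forcing $H_1=H_1\vee H_2=H_2$.

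For non-emptiness, I would observe that the discrete partition of $X$ into singletons is always a BBE: with every class a singleton, both conjunctions in $\Phi^{X_R}$ range over the trivial pairs $(x,x)$ and reduce to tautologies. This partition refines $X_R$, so it lies in $\mathcal{B}$. For closure under join, given $H_1,H_2\in\mathcal{B}$, I would take $H=H_1\vee H_2$, the equivalence relation obtained as the transitive closure of $H_1\cup H_2$. To verify that $H$ refines $X_R$: any $H$-equivalent pair is linked by a chain of $H_1$- or $H_2$-steps, each of which stays inside a single block of $X_R$ because $H_1$ and $H_2$ refine $X_R$, and transitivity of $X_R$ closes the chain. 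To verify that $H$ is a BBE: if $\mathbf{s}$ is constant on $H$, then since $H_1,H_2\subseteq H$ as relations, $\mathbf{s}$ is also constant on $H_1$ and on $H_2$; applying the BBE property of $H_1$ and $H_2$ separately yields that $F(\mathbf{s})$ is constant on $H_1$ and constant on $H_2$, so $F(\mathbf{s})_{x}=F(\mathbf{s})_{x'}$ holds for every $(x,x')\in H_1\cup H_2$. The same transitive-closure argument then propagates this equality along any $H$-chain, giving $F(\mathbf{s})$ constant on $H$.

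The main obstacle is precisely the closure-under-join step, and inside it the observation that constancy on $H$ is equivalent to simultaneous constancy on $H_1$ and on $H_2$. One direction uses only $H_1,H_2\subseteq H$; the other direction relies on $H$ being the transitive closure of $H_1\cup H_2$ rather than merely their union, which is why the equivalence is not entirely trivial. Once this bridge between the semantic condition in Definition~\ref{defBE} and the combinatorial description of the join is in place, both existence and uniqueness of the maximal BBE refining $X_R$ follow from the finite-lattice argument above.
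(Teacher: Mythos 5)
Your proposal is correct and follows essentially the same route as the paper's own proof: both establish that the transitive closure of the union of two BBEs refining $X_R$ is again a BBE refining $X_R$ (via the observation that constancy on the join implies constancy on each $H_i$, and the resulting equalities propagate along chains), and both conclude by finiteness of the partition lattice. Your explicit treatment of non-emptiness via the singleton partition and of uniqueness via the join are details the paper leaves implicit, but the underlying argument is the same.
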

	
	\begin{theorem}\label{th:computesmaximal}
		Algorithm~\ref{algorithm} computes the maximal BBE partition refining $X_{R}$.
	\end{theorem}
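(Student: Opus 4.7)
The plan is to establish two invariants of the \emph{while} loop and use them together with the existence statement from Theorem~\ref{th:maximal}. Let $H^\star$ denote the (unique) maximal BBE refining $X_R$, whose existence is guaranteed by Theorem~\ref{th:maximal}. The key loop invariant I intend to prove is: (i) $H$ is always a partition of $X$ that refines $X_R$, and (ii) $H^\star$ refines $H$. Both conditions hold initially, since $H$ is set to $X_R$ and $H^\star$ refines $X_R$ by definition.

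For the inductive step, suppose the invariant holds at the top of some iteration. If $\Phi^H$ is valid then $H$ is a BBE refining $X_R$; since $H^\star$ refines $H$ and $H^\star$ is the \emph{coarsest} BBE refining $X_R$, the two partitions must coincide, so the returned $H$ equals $H^\star$. Otherwise, the solver produces a counterexample $\mathbf{s}$, which by the shape of $\neg \Phi^H$ is necessarily constant on $H$, and the algorithm forms $H'$ by splitting each class $C$ of $H$ into $C_0$ and $C_1$ according to the value of $f_{x_i}(\mathbf{s})$. Clearly $H'$ refines $H$, hence also $X_R$. To see that $H^\star$ still refines $H'$, I would take any class $D \in H^\star$; by the inductive hypothesis $D$ is contained in some class $C \in H$, so $\mathbf{s}$ is also constant on $H^\star$ (constancy is inherited under refinement). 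Since $H^\star$ is a BBE, $F(\mathbf{s})$ is constant on $H^\star$, which means all variables $x_i \in D$ share the same value $f_{x_i}(\mathbf{s})$, so $D$ lies entirely inside either $C_0$ or $C_1$. Hence every $H^\star$-class is contained in some $H'$-class, preserving the invariant.

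Termination follows from the observation that whenever the \emph{else} branch is taken, the counterexample $\mathbf{s}$ witnesses at least one class $C \in H$ on which $f_{\cdot}(\mathbf{s})$ is not constant, so $C$ is split into two nonempty parts $C_0, C_1$ in $H'$. Thus $H'$ is a strict refinement of $H$, and since any chain of strict refinements of a partition of $X$ has length at most $|X|=n$, the loop exits after at most $n$ iterations. Combined with the invariant, the partition returned at exit is the maximal BBE refining $X_R$.

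The main subtlety, and the step I would be most careful about, is the constancy inheritance argument used in the inductive step: one must check that the counterexample returned by the SAT solver is indeed constant on $H$ (so that the premise of $\Phi^H$ is satisfied in the negated formula and one can transfer constancy to $H^\star$), and that the partition induced by the values $f_{x_i}(\mathbf{s})$ is compatible with any finer BBE. The remaining arguments (termination bound and the final identification $H=H^\star$ via maximality) are short and rest only on the definition of refinement and Theorem~\ref{th:maximal}.
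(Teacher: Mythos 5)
Your proof is correct and follows essentially the same route as the paper's: both maintain the invariant that the unique maximal BBE (from Theorem~\ref{th:maximal}) refines the current partition across iterations, conclude equality by mutual refinement once $\Phi^H$ becomes valid, and derive termination from the fact that each counterexample forces a strict refinement of a partition of a finite set. The only difference is one of detail: the paper merely asserts the inductive step, whereas you spell out the constancy-inheritance argument showing that each $H^\star$-class lands entirely inside $C_0$ or $C_1$.
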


	\subsection{Relating Dynamics of Original and Reduced BNs}\label{sec3.3}
	
	Given a BN $B$ and a BBE $R$, $STG(B/R)$ can be seen as the subgraph of $STG(B)$ composed of all states of $STG(B)$ that are constant on $R$ and their transitions. Of course, those states are transformed in $STG(B/R)$ by ``collapsing'' BBE-equivalent variables in the state representation.
	This can be seen by comparing the STG of the our running example (left part of Fig.~\ref{fig:image3}) and of its reduction (right part of Fig.~\ref{fig:image3}). The states (and transitions) of the STG of the reduced BN correspond to the purple states of the original STG. 

	Let $B$ be a BN with $n$ variables, $S \subseteq \mathbb{B}^n$ be the states of its STG, and $R$ a BBE for $B$. We use $S_{\mid R}$ to denote the subset of $S$ composed by all and only the states constant on $R$. With $STG(B)_{\mid R}$ we denote the subgraph of  $STG(B)$ containing $S_{\mid R}$ and its transitions. Formally  $STG(B)_{\mid R} = (S_{\mid R},T_{\mid R})$, where $T_{\mid R} = T \cap (S_{\mid R} \times S_{\mid R})$.
	
	The following lemma formalizes a fundamental property of $STG(B)_{\mid R}$, namely that all attractors of $B$ containing states constant on $R$ are preserved in $STG(B)_{\mid R}$.
	
	\begin{lemma}\textbf{(Constant attractors)}
		\label{lemma:constantAttractors}
		Let $B(X,F)$ be a BN, $R$ be a BBE, and $A$ an attractor. If $A \cap S_{\mid R} \neq \emptyset$ then $A \subseteq S_{\mid R}$ .
	\end{lemma}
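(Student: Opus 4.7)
The plan is to exploit the defining property of BBE, namely that if $\mathbf{s}$ is constant on $R$ then so is $F(\mathbf{s})$, together with the strong-connectivity clause in the definition of attractor. Concretely, I would first isolate the following auxiliary fact as the workhorse of the argument.

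\medskip\noindent\textbf{Key invariant.} For every $\mathbf{s} \in S_{\mid R}$ and every $\mathbf{t}$ with $\mathbf{s} \to^+ \mathbf{t}$, we have $\mathbf{t} \in S_{\mid R}$. I would prove this by induction on the length $k$ of the path $\mathbf{s} \to \mathbf{s}_1 \to \cdots \to \mathbf{s}_k = \mathbf{t}$. For $k=1$, the BBE condition (Definition~\ref{defBE}, in its state-oriented reading given right after the definition) says exactly that $F(\mathbf{s})$ is constant on $R$ whenever $\mathbf{s}$ is; hence $\mathbf{s}_1 = F(\mathbf{s}) \in S_{\mid R}$. The inductive step is immediate by applying the base case to $\mathbf{s}_{k-1} \in S_{\mid R}$.

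\medskip Now, to prove the lemma itself, I would pick any witness $\mathbf{s}^\star \in A \cap S_{\mid R}$ and any $\mathbf{s}' \in A$. By clause (1) of Definition~\ref{att}, $\mathbf{s}^\star \to^+ \mathbf{s}'$, so the key invariant applied to $\mathbf{s}^\star$ yields $\mathbf{s}' \in S_{\mid R}$. Since $\mathbf{s}' \in A$ was arbitrary, we conclude $A \subseteq S_{\mid R}$.

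\medskip I do not expect any real obstacle here: the entire content of the lemma is carried by the equivalent reformulation of BBE as ``$F$ maps $S_{\mid R}$ into itself'', which the paper already states explicitly. The only thing to be slightly careful about is to use $\to^+$ (transitive closure) rather than a single-step transition in the induction, so that the argument covers the full reachability guaranteed by the attractor's strong connectivity; this is handled cleanly by the induction on path length above.
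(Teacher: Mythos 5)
Your proof is correct and follows essentially the same route as the paper's: both hinge on the invariant that constancy on $R$ is preserved along $\to^+$ (which you make explicit via induction on path length, where the paper just invokes ``the properties of BBE'') and then use the strong-connectivity clause of the attractor definition to propagate constancy from the witness state to all of $A$. If anything, your version is marginally leaner, since you only need clause (1) of Definition~\ref{att} and not the full equality $A = \{\textbf{t} \mid \textbf{s} \to^+ \textbf{t}\}$ that the paper asserts.
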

	
	
	We now define the bijective mapping $m_R : S_{\mid R} \leftrightarrow S_R$ induced by a BBE $R$, where $S_R$ are the states of $STG(B/R)$, as follows:
	$m_R(\textbf{s}) = (v_{C_1}, \dots , v_{C_{|X/R|}})$ where $v_{C_j} = s_{x_i}$ for some $x_i \in C_j$.
	In words $m_R$ bijectively maps each state of $STG(B)_{\mid R}$ to their compact representation in $STG(B/R)$.
	Indeed, $STG(B)_{\mid R}$ and $STG(B/R)$ are isomorphic, with $m_R$ defining their (bijective) relation.  We can show this through the following lemma.
	
	\begin{lemma}\textbf{(Reduction isomorphism)}
		\label{lemma:isomorphism}
		Let $B(X,F)$ be a BN and $R$ be a BBE. Then, it holds
		\begin{enumerate}
			\item For all states $\textbf{s} \in S_{\mid R}$ it holds
			$
			F_R(m_R(\textbf{s})) = m_R(F(\textbf{s})).
			$
			\item For all states $\textbf{s} \in S_{R}$ it holds
			$
			F(m^{-1}_R(\textbf{s})) = m^{-1}_R(F_R(\textbf{s})).
			$
		\end{enumerate}
	\end{lemma}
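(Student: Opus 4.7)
The plan is to prove part~(1) directly from the BBE condition together with the definition of $F_R$, and then derive part~(2) as an immediate corollary of part~(1) via the bijection $m_R$.

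For part~(1), fix $\textbf{s} \in S_{\mid R}$. Since $\textbf{s}$ is constant on $R$, Definition~\ref{defBE} (applied to the assignment $\textbf{s}$) guarantees that $F(\textbf{s})$ is also constant on $R$, so $m_R(F(\textbf{s}))$ is well-defined. I will then verify the equality componentwise over the classes $C \in X_R$. On the right-hand side, the $C$-component of $m_R(F(\textbf{s}))$ is $f_{x_k}(\textbf{s})$ for any chosen $x_k \in C$; by the BBE property this value is independent of the choice of representative $x_k$, which is the crucial point that makes the $C$-component well-defined. On the left-hand side, by the definition of $B/R$, the $C$-component of $F_R(m_R(\textbf{s}))$ is $f_{x_C}(m_R(\textbf{s})) = f_{x_k}\{x_{C'}/x_i : C'\in X_R,\,x_i \in C'\}$ evaluated at $m_R(\textbf{s})$, for some fixed $x_k \in C$. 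Here each reduced variable $x_{C'}$ gets the value $(m_R(\textbf{s}))_{C'}$, which, by definition of $m_R$, equals $s_{x_i}$ for any $x_i \in C'$. Hence evaluating the substituted $f_{x_k}$ at $m_R(\textbf{s})$ produces exactly the same Boolean value as evaluating $f_{x_k}$ at $\textbf{s}$, since the substitution only renames variables in a way consistent with the constancy of $\textbf{s}$ on $R$. The two sides therefore agree component by component.

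For part~(2), I simply apply part~(1) to $m_R^{-1}(\textbf{s}) \in S_{\mid R}$, obtaining $F_R(m_R(m_R^{-1}(\textbf{s}))) = m_R(F(m_R^{-1}(\textbf{s})))$, i.e.\ $F_R(\textbf{s}) = m_R(F(m_R^{-1}(\textbf{s})))$. Applying $m_R^{-1}$ to both sides yields the desired identity $m_R^{-1}(F_R(\textbf{s})) = F(m_R^{-1}(\textbf{s}))$. This also implicitly uses that $F(m_R^{-1}(\textbf{s}))$ lies in $S_{\mid R}$, which again follows from the BBE property since $m_R^{-1}(\textbf{s})$ is constant on $R$.

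The main obstacle I expect is the careful bookkeeping of the substitution $f_{x_k}\{x_{C'}/x_i : \dots\}$: one must argue, cleanly and without hand-waving, that evaluating the substituted term at the reduced state $m_R(\textbf{s})$ produces the same Boolean as evaluating the unsubstituted $f_{x_k}$ at the original constant state $\textbf{s}$. This is intuitively clear because the substitution just identifies variables within the same class and $\textbf{s}$ assigns all such variables the same value, but writing it rigorously may require a small induction on the syntactic structure of $f_{x_k}$ (or, equivalently, a pointwise argument viewing $f_{x_k}$ as a function). Once this substitution lemma is in place, the rest is routine, and independence of the representative $x_k \in C$ is precisely what Definition~\ref{defBE} delivers.
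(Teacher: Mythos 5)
Your proposal is correct and matches the paper's approach: the paper simply states that the lemma ``follows readily from the definition of a BBE and $m_R$,'' and your argument is exactly that definitional unfolding, carried out carefully (well-definedness of $m_R(F(\textbf{s}))$ via the BBE property, componentwise comparison using the substitution in $F_R$, and part~(2) as a corollary of part~(1) via the bijection). No gap to report.
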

	
	The previous Lemma ensures that BBE does not generates spurious trajectories or
	attractors in the reduced system. We can now state the main result of our approach, namely that the BBE reduction of a BN for a BBE $R$ exactly preserves all attractors that are constant on $R$ up to renaming with $m_R$.
	
	\begin{theorem}\textbf{(Constant attractor preservation)}\label{th:attractorPreservation}
		Let $B(X,F)$ be a BN, $R$ a BBE, and $A$ an attractor. If $A \cap S_{\mid R} \neq \emptyset$ then $m_R(A)$ is an attractor for B/R.
	\end{theorem}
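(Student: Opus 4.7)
The plan is to reduce the theorem to a short composition of the two preceding lemmas. First I would observe that, since $A$ is an attractor with $A \cap S_{\mid R} \neq \emptyset$, Lemma~\ref{lemma:constantAttractors} immediately gives $A \subseteq S_{\mid R}$, so $m_R$ is defined on all of $A$ and $m_R(A) \subseteq S_R$ is a well-defined set on which to verify the two attractor conditions.

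Next I would verify the strong-connectivity clause of Definition~\ref{att}. The key remark is that a one-step transition $\mathbf{s} \to \mathbf{s}'$ inside $A$ means $\mathbf{s}' = F(\mathbf{s})$, and by Lemma~\ref{lemma:isomorphism}(1) this transports to $m_R(\mathbf{s}') = m_R(F(\mathbf{s})) = F_R(m_R(\mathbf{s}))$, i.e.\ a one-step transition $m_R(\mathbf{s}) \to m_R(\mathbf{s}')$ in $STG(B/R)$. A straightforward induction on path length then shows that any path $\mathbf{s} \to^+ \mathbf{s}'$ inside $A$ yields $m_R(\mathbf{s}) \to^+ m_R(\mathbf{s}')$ in $STG(B/R)$. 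Applied to arbitrary pairs in $m_R(A)$, whose preimages lie in $A$, this delivers condition (1).

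For the absorbing clause, I would argue in the opposite direction using Lemma~\ref{lemma:isomorphism}(2). Take $\mathbf{t} \in m_R(A)$ and any $\mathbf{t}' \in S_R$ with $\mathbf{t} \to^+ \mathbf{t}'$ in $STG(B/R)$. A one-step transition $\mathbf{t} \to F_R(\mathbf{t})$ pulls back through $m_R^{-1}$ to $m_R^{-1}(\mathbf{t}) \to F(m_R^{-1}(\mathbf{t}))$ in $STG(B)$, and inducting on the path length shows $m_R^{-1}(\mathbf{t}) \to^+ m_R^{-1}(\mathbf{t}')$ in $STG(B)$. Since $m_R^{-1}(\mathbf{t}) \in A$ and $A$ is absorbing in $STG(B)$, we get $m_R^{-1}(\mathbf{t}') \in A$, hence $\mathbf{t}' \in m_R(A)$, which is condition (2).

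The only mildly delicate point is making the inductive transfer of transitive-closure transitions across $m_R$ precise in both directions; for the forward direction one also uses that intermediate states along the path remain in $A \subseteq S_{\mid R}$ (so $m_R$ is defined on them), which follows from $A$ being closed under $F$. Beyond that bookkeeping the argument is a direct application of Lemmas~\ref{lemma:constantAttractors} and~\ref{lemma:isomorphism}, with no new combinatorial content.
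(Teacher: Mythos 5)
Your proposal is correct and follows exactly the route the paper takes: its proof of Theorem~\ref{th:attractorPreservation} simply states that the result follows from Lemmas~\ref{lemma:constantAttractors} and~\ref{lemma:isomorphism}, and your write-up supplies precisely the details of that composition (Lemma~\ref{lemma:constantAttractors} to place $A$ inside $S_{\mid R}$, then the two directions of Lemma~\ref{lemma:isomorphism} to transfer the two attractor conditions across $m_R$). No discrepancies to report.
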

	
	
	
	
	

	\section{Application to BNs from the Literature}\label{sec4}
	We hereby apply BBE to BNs from the GINsim repository. 
	Section~\ref{sec:42} validates BBE on all models from the repository, while Section~\ref{casestudies} studies the runtime speedups brought by BBE on attractor-based analysis of selected case studies, showing cases for which BBE makes the analysis feasible. 
	\footnote{These models are further analysed in Appendix~\ref{sec:appendix_casestudies} using  initial partitions based on information from the original publications, obtaining better reductions. 
	}
	%
	Section~\ref{comparisonpnas} compares BBE with the approach based on ODE encoding from~\cite{cardelli2017maximal}, showing how  such encoding leads to scalability issues and to the loss of reduction power.
	\footnote{
		Appendix~\ref{app:speedups} 
		further studies BBE-induced runtime speedups to STG generation on the repository. We display again cases where BBE makes the analysis feasible.  
	}
	
	\begin{figure}[t]
		\centering
		\includegraphics[width=0.95\linewidth]{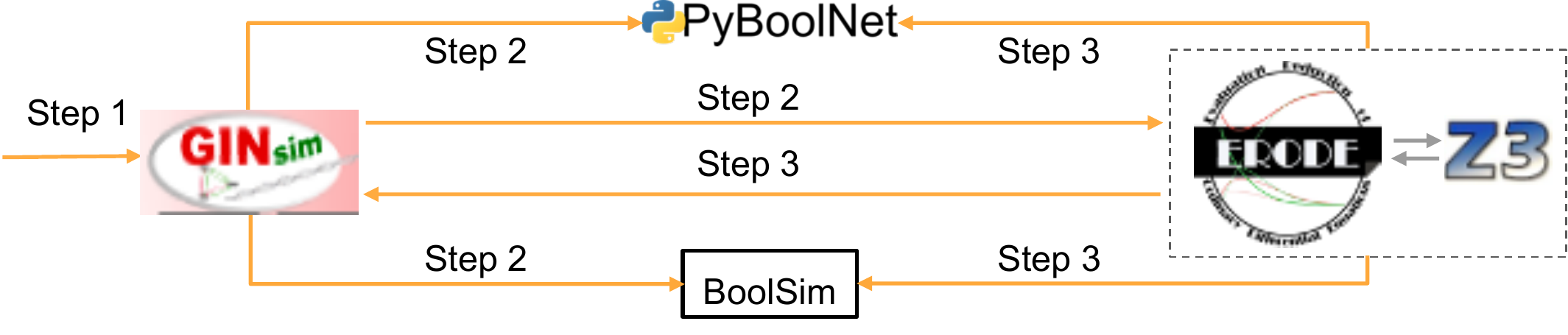}
		\caption{BBE toolchain.
			(Step 1) We  use GINsim~\cite{chaouiya2012logical} to access its model repository, and (Step 2) export it in the formats of the other tools in the toolchain to perform: STG generation (PyBoolNet~\cite{pyboolnet}), attractor analysis (BoolSim~\cite{di2007dynamic}), and BBE reduction (ERODE~\cite{cardelli2017erode}).
			(Step 3) We export the reduced models for analysis to PyBoolNet and BoolSim, or to GINsim.
		}
		\label{fig:toolchain}
	\end{figure}
	
	The experiments have been made possible by a novel  toolchain (Fig.~\ref{fig:toolchain}) combining tools from the COLOMOTO initiative~\cite{naldi2015cooperative}, and the
	reducer tool ERODE~\cite{cardelli2017erode} which was extended here to support BBE-reduction. For Algorithm~\ref{algorithm} we use the solver Z3~\cite{de2008z3} which was already integrated in ERODE.

	All experiments 
	were conducted on a common laptop with an Intel Xeon(R) 
	2.80GHz and $32$GB of RAM.
	We imposed an arbitrary 
	timeout of 24 hours for each task, after which we terminated the analysis.
	We refer to these cases as \emph{time-out}, while we use \emph{out-of-memory} if a tool terminated with a memory error. 
	
	\subsection{Large Scale Validation of BBE on BNs}\label{sec:42}
	
	We validate BBE on real-world BNs in terms of the number of BNs that can be reduced and the average reduction ratio. 
	
	\paragraph{Configuration.}
	We conducted our investigation on the whole GINsim model repository which contains 85 networks: 29 are Boolean, and 56 are multivalued.
	In multivalued networks (MNs), some variables have more than 2 activation statuses, e.g. $\{0,1,2\}$. These models are automatically \emph{booleanized}~\cite{delaplace2020bisimilar,chaouiya2013sbml} by GinSim when exporting in the input formats of the other tools in the tool-chain.
	
	Most of the models in the repository have a specific structure \cite{naldi2012efficient} where a few variables are so-called \emph{input variables}. These are variables whose update functions are either a stable function (e.g. $x(t+1)=0$, $x(t+1)=1$) or the identity function (e.g. $x(t+1)=x(t)$). These are named `input' because their values are explicitly set by the modeler to perform experiments campaigns.
	We investigate two reduction scenarios relevant to  input variables. In the first one, Algorithm~\ref{algorithm} starts with initial partitions that lead to the \emph{maximal reduction}, i.e. consisting of one block only. In the second scenario, we provide initial partitions that isolate inputs in singleton blocks. Therefore, we prevent their aggregation with other variables, and obtain reductions independent of the values of the input variables (we recall that BBE requires related variables to be initialized with same activation value). We call this case \emph{input-distinguished (ID) reduction}.
	
	\paragraph{Results.}
	By using the maximal reduction setting, we obtained reductions on 61 of the 85 models, while we obtained ID reductions on 38 models.
	We summarize the reductions obtained for the two settings in Fig.~\ref{charts}, displaying the distribution of the reduction ratios $r_m=N_m/N$ and $r_i=N_i/N$, where $N$, $N_m$ and $N_i$ are the number of variables in the original BN, in the maximal BBE-reduction, and in the ID one, respectively.
	\footnote{More details can be found in 
		Table~\ref{table with results} of Appendix~\ref{largescaletable}.
	}
	%
	We also provide the average reduction ratios on the models, showing that it does not substantially change across Boolean or multivalued models.
	No reduction took more than 3 seconds.

	%
	

	\begin{figure}
		\centering
		\includegraphics[width=0.5\linewidth]{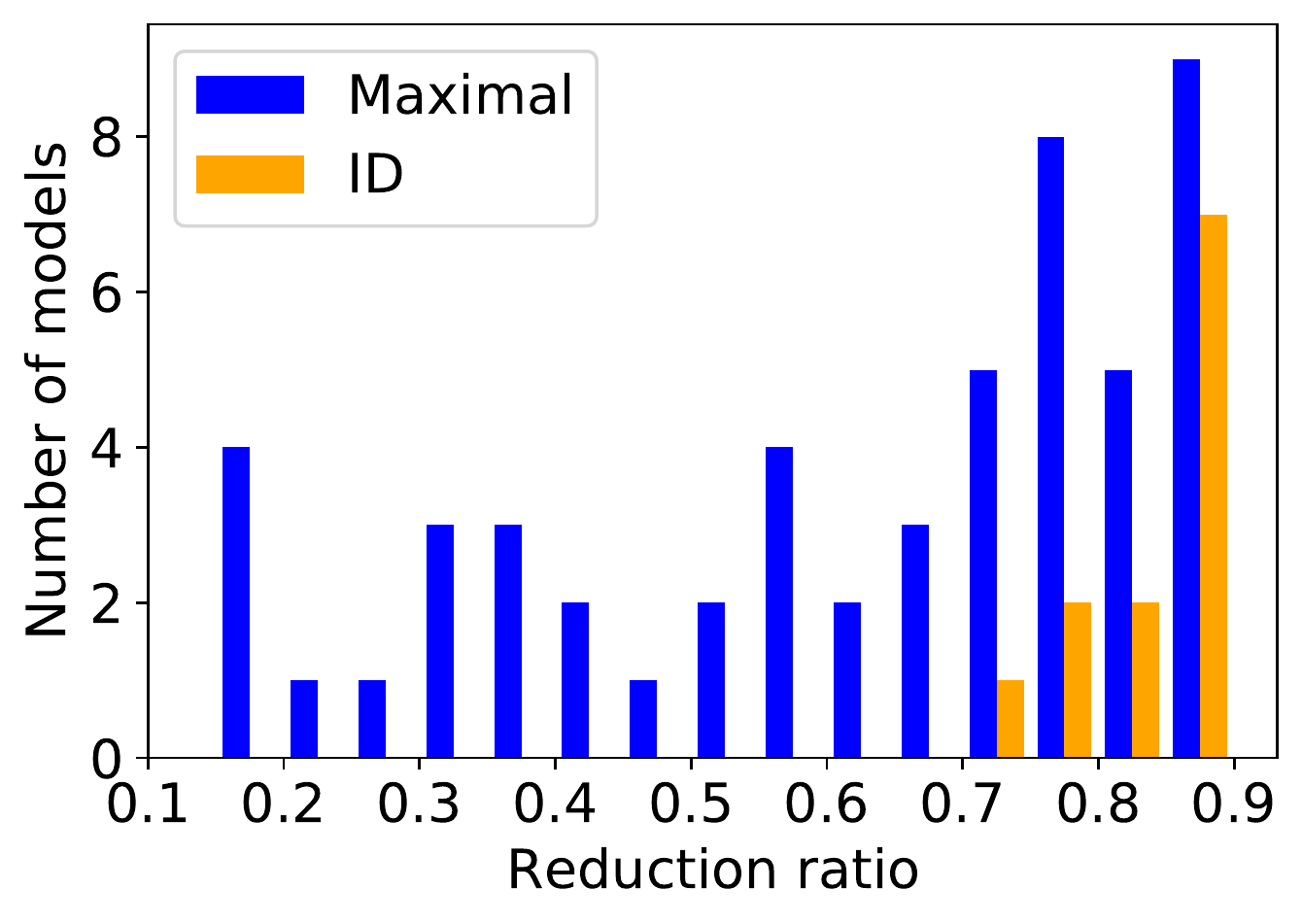}
		\qquad
		\scalebox{1.0}{
			\begin{tabular}{ccc}
				\multicolumn{3}{c}{Average reduction ratios}\\
				\toprule
				& \emph{Maximal} & \emph{ID}  \\
				BNs  &  0.66& 0.83  \\
				MNs  & 0.68 & 0.95\\
				ALL & 0.67 & 0.91\\
				\bottomrule
				\vspace{4.2cm}
			\end{tabular}
		}
		\vspace{-3cm}
		\caption{(Left) Distribution of reduction ratios (reduced variables over original ones) on all models from the GINsim repository using the maximal and ID reduction strategy.
			Each bar counts the number of models with that reduction ratio, starting from 15\% up to 90\%, with step 5\%.
			(Right) Average reduction ratios for Boolean, Multivalued and all models.\label{charts}}
	\end{figure}
	

	%
	
	\paragraph{Interpretation.}
	BBE reduced a large number of models (about $72\%$). In particular, this happened in $24$ out of the $29$ ($83\%$) Boolean models and in $37$ out of $56$ ($66\%$) multivalued networks. The average reduction ratio for the maximal and ID strategies are $0.67$ and $0.91$, respectively. 
	For the former strategy, we get trivial reductions in 22 models wherein only input variables are related. In such trivial cases, the ID strategy does not lead to reduction. 
	In other cases, the target variables of inputs (i.e. variables with incoming edges only from input variables considering the graphical representation of variables) appeared to be backward equivalent together with the input variables. This results in reductions with large equivalence classes consisting of input variables and their descendants. 
	These are interesting reductions which get lost using the ID approach, as the input variables get isolated.

	\subsection{Attractor analysis of selected case studies}\label{casestudies}
	
	\paragraph{Hypothesis.} We now investigate the fate of asymptotic dynamics after BBE-reduction, and test the computational efficiency in terms of time needed for attractor identification in the original and reduced models. 
	We expect that BBE-reduction can be utilized to (i) gain fruitful insights into large BN models and (ii) to reduce the time needed for attractor identification.
	
	\paragraph{Configuration.}
	Our analysis focuses on three BNs from the GINsim repository. The first is the Mitogen-Activated Protein Kinases (MAPK) network~\cite{grieco2013integrative} with 53 variables. The second refers to the survival signaling in large granular lymphocyte leukemia (T-LGL)~\cite{zhang2008network} and contains 60 variables. The third is the merged Boolean model~\cite{rodriguez2019cooperation} of T-cell and Toll-like receptors (TCR-TLR5) which is the largest BN model in GINsim repository with 128 variables.

	\paragraph{Results.}
	The results of our analysis are summarized in Table~\ref{rt} for the original, ID- and maximal-reduced BN. We present the number of variables (\emph{size}) and of Attractors (\emph{Attr.}), the time for attractor identification on the original model (\emph{An. (s)}) and that for reduction plus attractor identification (\emph{Red. + An.~(s)}). 
	
	\begin{table}[h]
		\centering
		\scalebox{0.95}{
			\begin{tabular}{c rrr  rrr  rrr }
				\toprule
				\multicolumn{4}{c}{\emph{Original model}}     &  \multicolumn{3}{c}{\emph{ID reduction}}&  \multicolumn{3}{c}{\emph{Maximal reduction}} \\
				\cmidrule(r){1-4} \cmidrule(r){5-7} \cmidrule(r){8-10}
				& \emph{Size}& \emph{Attr.} & \emph{An.(s)} &\emph{Size}& \emph{Attr.} & \multicolumn{1}{c}{\emph{Red.+An.(s)}}&\emph{Size}& \emph{Attr.} &\multicolumn{1}{c}{ \emph{Red.+An.(s)}}\\
				\cmidrule(r){2-4} \cmidrule(r){5-7} \cmidrule(r){8-10}
				MAPK Network	& 53 &40 										& 16.50      &46  &40  & 15.33 &39 &17 & 3.49
				\\
				T-LGL 			&60  &264 										& 123.43     &57  &264 & 86.84  &52 &6 &3.49
				\\
				TCR-TLR  &128 &\multicolumn{2}{c}{---\emph{Time Out}---}   & 116 &\multicolumn{2}{c}{---\emph{Time Out}---}   & 95 &2 & 31.29 \\
				\bottomrule
			\end{tabular}
		}
		\caption{Reduction and attractor analysis on 3 selected case studies.}
		\label{rt}
	\end{table}
	
	\paragraph{Interpretation.}
	ID reduction preserves all attractors reachable from any combination of activation values for inputs. 
	This is an immediate consequence of~\ref{lemma:isomorphism}, Theorem~\ref{th:attractorPreservation} and the fact that number of attractors in the original and the ID reduced BN is the same (see Table~\ref{rt}). Maximal reduction might discard some attractors.
	We also note that, despite the limited reduction in terms of obtained number of variables, we have important analysis speed-ups, up to two orders of magnitude. Furthermore, the largest model could not be analyzed, while it took just 30 seconds to analyze its maximal reduction identifying 2 attractors.
	\footnote{There might be further attractors of interest in addition to these. In 
		Appendix~\ref{sec:appendix_casestudies} 
	we show how BBE could be used by a modeler by imposing ad-hoc initial partitions to preserve more attractors while reducing more than with the ID stategy.}
	%
	
	
	
	\subsection{Comparison with ODE-based approach from~\cite{cardelli2017maximal} }\label{comparisonpnas}
	As discussed, BBE is based on the backward equivalence notion firstly provided for ordinary differential equations (ODEs), chemical reaction networks, and Markov chains~\cite{DBLP:conf/popl/CardelliTTV16,cardelli2017maximal}.
	Notably,~\cite{cardelli2017maximal} shows how the notion for ODEs can be applied indirectly to BNs via an \emph{odification} technique~\cite{odification} to encode BNs as ODEs. Such odification transforms each BN variable into an ODE variable that takes values in the continuous interval [0,1]. The obtained ODEs preserve the attractors of the original BN because the equations of the two models coincide when all variables have value either 0 or 1. However, infinitely more states are added for the cases in which the variables do not have integer value.
	
	\begin{wrapfigure}{r}{0.35\textwidth}
		\vspace{-0.4cm}
		\begin{center}
			\includegraphics[width=0.35\textwidth]{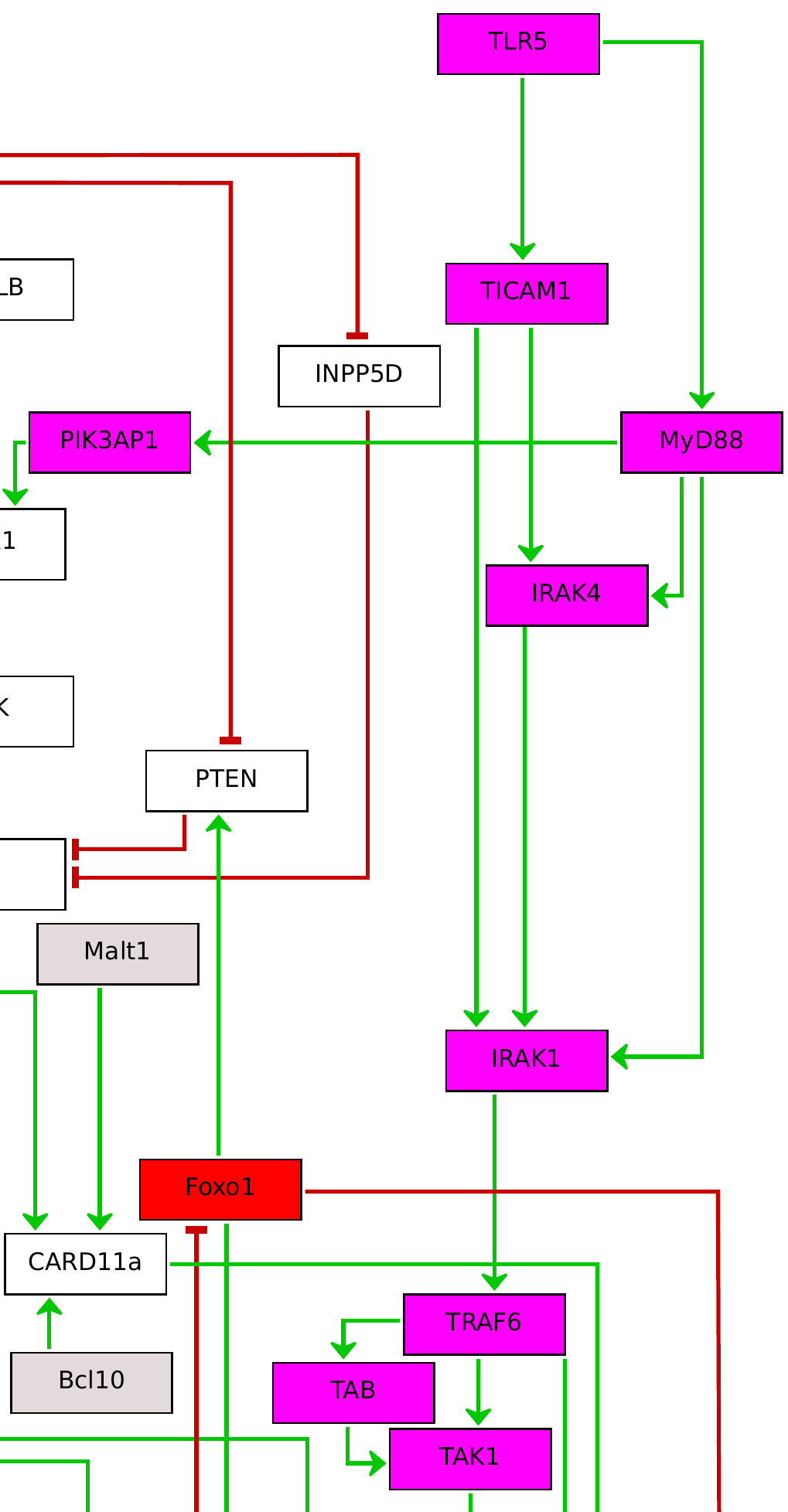}
		\end{center}
		\vspace{-0.4cm}
		\caption{Excerpt of GINsim's depict of TCR-TLR.\label{fig:tcrtlrexcerpt}}
		\vspace{-1.2cm}
	\end{wrapfigure}
	\paragraph{Scalability.}
	The technique from~\cite{cardelli2017maximal} has been proved able to handle models with millions of variables. Instead, the odification technique is particularly computationally intensive. Due to this, it failed on some models from the GINsim repository, including two from~\cite{20}, namely \emph{core\_engine\_budding\_yeast\_CC} and \emph{coupled\_budding\_yeast\_CC}, consisting of 39 and 50 variables, respectively. Instead, BBE could be applied in less than a second.
	
	\paragraph{Reduction power.}
	Another example is the \emph{TCR-TLR} model from the previous section. In this case, both the ODE-based and BBE techniques succeeded. However, BBE led to better reductions due to the added non-integer states in the ODEs. Intuitively, the ODE-based technique \emph{counts} incoming influences from equivalence classes of nodes, while BBE only checks whether at least one of such influence is present or not.
	Figure~\ref{fig:tcrtlrexcerpt} shows an excerpt of the graphical representation of the model by GINsim. We use background colors of nodes to denote BBE equivalence classes (white denotes singleton classes). We see a large equivalence class of magenta species, 3 of which (\emph{IRAK4}, \emph{IRAK1}, and \emph{TAK1}) receive two influences by magenta species, while the others receive only one.
	This differentiates the species in the ODE-based technique,  keeping only the top four in the \emph{magenta} block, while all the others end up in singleton blocks. 
	We compare the original 
	equations of \emph{MyD88} and  \emph{IRAK4} which have 1 and 2 incoming influences each.
	\begin{align*}
		x_\mathit{MyD88}(t+1) &= x_\mathit{TLR5}(t) 
		\\
		x_\mathit{IRAK4}(t+1) &= (\neg x_\mathit{MyD88}(t) \wedge x_\mathit{TICAM1}(t)) \vee (x_\mathit{MyD88}(t)) 	
	\end{align*}
	We see that the two variables are BBE because their update functions depend only on the BBE-equivalent variables \emph{TLR5} and \emph{MyD88}, respectively. For \emph{IRAK4}, the three variables in the update function are BBE. Therefore, they have same value allowing us to simplify the update function to just \emph{MyD88}.
	The ODEs obtained for the 2 variables are, where $x_{-}'$ denotes the derivative of $x_{-}$:
	\begin{align*}
		x_\mathit{MyD88}' &= x_\mathit{TLR5} - x_\mathit{MyD88}\\
		x_\mathit{IRAK4}' &= x_\mathit{MyD88} + x_\mathit{TICAM1} -x_\mathit{MyD88}\cdot x_\mathit{TICAM1}   -x_\mathit{IRAK4}
	\end{align*}
	Given that all variables appearing in the equations are backward equivalent,  the two equations coincide with the original ones when all variables have values either 0 or 1. However, they differ for non-integer values.
	For example, in case all variables have value $0.5$, we get $0$ for the former, and $0.25$ for the latter.
	
	

	\section{Related Work}\label{sec:related}
	
	BN reduction techniques belong to three families according to their domain of reduction: (i) they reduce at syntactic level (i.e. the BN \cite{naldi2011dynamically,veliz2011reduction,richardson2005simplifying,bilke2001stability,naldi2012efficient,saadatpour2013reduction,Zanudoreduction}), (ii) at semantic level (i.e. the STG \cite{figueiredo2016relating,berenguier2013dynamical}),
	or (iii) they transform BNs to other formalisms like Petri Nets \cite{chaouiya2008petri,steggles2007qualitatively} and ordinary differential equations \cite{wittmann2009transforming} offering formalism-specific reductions. However, (semantic) STG-reduction does not solve the state space explosion whereas the transformation to other formalisms has several drawbacks as shown in Section~\ref{comparisonpnas}.
	
	Syntactic level reduction methods usually perform variable absorption~\cite{bilke2001stability,naldi2011dynamically,veliz2011reduction,saadatpour2013reduction} at the BN. 
	%
	BN variables can get absorbed by the update functions of their target variables by replacing all occurrences  of  the  absorbed  variables  with  their  update  functions.
	This method was first investigated in~\cite{naldi2011dynamically} wherein update functions are represented as ordinary multivalued decision diagrams.
	The authors consider multivalued networks with updates being applied asynchronously and 
	iteratively implement absorption. The process, despite preserving steady states in all synchronization schemas~\cite{veliz2011reduction}, might lead to loss of cycle attractors in the synchronous schema. 
	However, absorption of variables might lead to introduction of new attractors in the asynchronous case, i.e., by reducing the number of variables the number of attractors can stay the same or increase (attractors can split or new attractors can appear).
	
	A similar study~\cite{veliz2011reduction} presents a reduction procedure and proves that it preserves steady states. This procedure includes two steps. The first refers to the deletion of links between variables on their network structure. Deletion of pseudo-influences is feasible by simplifying the Boolean expressions in update functions. 
	The second step of the procedure refers to the absorption of variables like in~\cite{naldi2011dynamically}.
	
	The difference between studies \cite{veliz2011reduction}, \cite{naldi2011dynamically} is that~\cite{veliz2011reduction} exploits Boolean algebra instead of multivalued decision diagrams to explain absorption. Moreover, they refer only to Boolean networks, and do not consider any update schema.
	In studies~\cite{naldi2011dynamically,veliz2011reduction,saadatpour2013reduction}, self-regulated BN variables (i.e. variables with a self-loop in the graphical representation) can not be selected for absorption. The inability to absorb self-regulated variables is inherent in the implementation of absorption in contrast to our method where the restrictions are encoded by the user at the initial partition and self-regulated variables can be merged with other variables.
	
	In \cite{saadatpour2013reduction} the authors presented a two step reduction algorithm. The first step includes the absorption of input variables with stable function and the second step the absorption of single mediator variables (variables with one incoming and outgoing edge in the signed interaction graph). 
	%
	The first step of the algorithm in \cite{saadatpour2013reduction} is equally useful and compatible with the first step of \cite{veliz2011reduction}. Moreover, if we combine the first steps of \cite{veliz2011reduction} and \cite{saadatpour2013reduction}, we may achieve interesting reductions which exactly preserve all asymptotic dynamics.
	
	
	The first steps of \cite{veliz2011reduction,saadatpour2013reduction} affect only a BN property called \emph{stability}. Stability is the ability of a BN to end up to the same attractor when starting from slightly different initial conditions. 
	In \cite{bilke2001stability}, the authors introduced the decimation procedure -a reduction procedure for synchronous BNs- to discuss how it affects stability. The crucial difference between decimation procedure and BBE-reduction is that the first was invented to study stability whereas the latter was invented to degrade state space explosion. The decimation procedure is summarized by the following four steps: (i) remove from every update functions the inputs that it does not depend on,
	(ii) find the constant value for variables with no inputs, (iii) propagate the constant values to other update functions and remove this variable from the system, and (iv) if a variable has become constant, repeat from step (i). The study also refers to leaf variables because their presence does not play any role in the asymptotic dynamics of a BN. However, both leaf and fixed-valued variables affect stability. Overall, the decimation procedure exactly preserves the asymptotic dynamics of the original model since it throws out only variables considered as asymptotically irrelevant.

	\section{Conclusion}\label{sec6}
	
	We introduced an automatic reduction technique for synchronous Boolean Networks which preserves  dynamics of interest. The modeller gets a reduced BN based on requirements expressed as an initial partition of variables.
	The reduced BN can recover a pure part of the original state space and its trajectories established by the reduction isomorphism. Notably, we draw connections between the STG of the original and that of the reduced BN through a rigorous mathematical framework. The dynamics preserved are those wherein collapsed variables have equal values.
	
	We used our reduction technique to speed-up attractor identification. Despite that the length of the preserved attractors is consistent in the reduced model, some of them may get lost. In the future, we plan to study classes of initial partitions that preserve all attractors. 
	We have shown the analysis speed-ups obtained for attractor identification as implemented in the tool BoolSim~\cite{BoolsimSoftw}. In the future we plan to perform a similar analysis on a recent attractor identification approach from~\cite{dubrova2011sat}.
	
	Our method was implemented in ERODE~\cite{cardelli2017erode}, a freely available tool for reducing biological systems. Related \emph{quantitative} techniques offered by ERODE have been recently validated on a large database of biological models~\cite{DBLP:conf/cmsb/Perez-VeronaTV19,cmsb2019tcs,DBLP:journals/corr/abs-2101-03342}. In the future we plan to extend this analysis considering also BBE. We also plan to investigate whether BBE can be extended  in order to be able to compare different models as done for its quantitative counterparts~\cite{DBLP:conf/lics/CardelliTTV16,DBLP:journals/tcs/CardelliTTV19}. 
	
	Our method could be combined with most of the existing methods found in literature. Our prototype toolchain consists of several tools from the COLOMOTO interoperability initiative. We aim to incorporate our toolchain into the COLOMOTO Interactive Notebook~\cite{10.3389/fphys.2018.00680}, a unified environment to edit, execute, share, and reproduce analyses of qualitative models of biological networks.
	
	Multivalued BNs, i.e. whose variables can take more than two activation values, are currently supported only via a \emph{booleanization} technique~\cite{delaplace2020bisimilar,chaouiya2013sbml} that might hamper the interpretability of the reduced model. In future work we plan to generalize BBE to support directly multivalued networks. 


	\bibliographystyle{splncs04}
	\bibliography{ms}

\newpage

\appendix

\section{Proofs}\label{sec:proofs}

\begin{proof}[Proof of Theorem~\ref{th:maximal}]
Let $X_{R_1}$, $X_{R_2}$ be two BBE partitions that refine some other partition $X_I$ that is not necessarily a BBE. We start by noting that $R = (R_1 \cup R_2)^\ast \subseteq I$ because $R_1, R_2 \subseteq I$, where asterisk denotes the transitive closure, while $R_1, R_2$ and $I$ are equivalence relations underlying $X_{R_1}, X_{R_2}$ and $X_I$, respectively. Hence, $X_R$ is a refinement of $X_I$. We next show that $X_R$ is a BBE partition. To this end, fix some $\textbf{s} \in \mathbb{B}^n$ that is constant on $R$. Since $R_i \subseteq R$, this implies that $\textbf{s}$ is constant on $R_i$ which, in virtue of $X_{R_i}$ being a BBE, implies that $F(\textbf{s}) \in \mathbb{B}^n$ is constant on $R_i$. This implies that $F(\textbf{s}) \in \mathbb{B}^n$ is constant on $R = (R_1 \cup R_2)^\ast$, thus showing that $X_R$ is indeed a BBE partition. The overall claim follows by noting that the finiteness of $X$ implies that there are finitely many BBE partitions $X_{R_i}$ that refine any given partition $X_I$ of $X$.
\end{proof}

%

\begin{proof}[Proof of Theorem~\ref{th:computesmaximal}]
Assume that $G'$ denotes the coarsest BBE partition that refines some given partition $G$. Set $H_0 := G$ and define for all $k \geq 0$
\begin{align*}
H_{k+1} := \big( \{ C_0 \mid C \in H_k \} \cup \{ C_1 \mid C \in H_k \} \big) \setminus \{\emptyset\},
\end{align*}
where $C_0$ and $C_1$ are as in Algorithm~\ref{algorithm}. Then, a proof by induction over $k \geq 1$ shows that (a) $G'$ is a refinement of $H_k$ and (b) $H_k$ is a refinement of $H_{k-1}$, for all $k \geq 1$. Since $G'$ is a refinement of any $H_k$, it holds that $G' = H_k$ if $H_k$ is a BBE partition. Since $X$ is finite, b) allows us to fix the smallest $k \geq 1$ such that $H_k = H_{k-1}$. This, in turn, implies that $H_{k-1}$ is a BBE.
\end{proof}

\begin{proof}[Proof of Lemma~\ref{lemma:constantAttractors}]
The fact that $A \cap S_{\mid R} \neq \emptyset$ implies that there is at least one state $\textbf{s}\in A$ that is constant on $R$, i.e., $\textbf{s} \in A \cap S_{\mid R}$. For any such state $\textbf{s}$, by the properties of BBE we have that any state $\textbf{t}$ such that $\textbf{s} \rightarrow^+ \textbf{t}$ is also constant on $R$. Actually,  it is trivial to show that $A = \{ \textbf{t} \mid \textbf{s} \rightarrow^+ \textbf{t} \}$. It immediately follows that $A \subseteq S_{\mid R}$.
\end{proof}

\begin{proof}[Proof of Lemma~\ref{lemma:isomorphism}]
Follows readily from the definition of a BBE and $m_R$.
\end{proof}

\begin{proof}[Proof of Theorem~\ref{th:attractorPreservation}]
	The theorem trivially follows from Lemmas~\ref{lemma:constantAttractors} and~\ref{lemma:isomorphism}.
\end{proof}

\section{Table of large-scale validation}\label{largescaletable}
We provide the table referenced in the Section \ref{sec:42} on large-scale validation of BBE. The table contains the results of BBE reduction on all the models from the GINsim repository. The first column contains the model identifier (MI). The second column contains 
the url to download the model and the third column displays the number of variables in the original BN. In the case of multivalued networks, the column contains the number of variables after booleanization. We denote with $N_m,$ $N_i$ the number of variables of the maximal and the ID reduced BN in the fifth and sixth column respectively. Note that when a BN has no input variables $N_i$ and $N_m$ coincide. The last two columns display the reduction ratios $r_i=N_i/N$, $r_m=N_m/N$ where $N$ is the number of variables in the original BN.

\newcolumntype{H}{>{\setbox0=\hbox\bgroup}c<{\egroup}@{}}
\begin{table}
	\scriptsize
	\begin{center}
		\scalebox{1.1}{
			\begin{tabular}{c| c | H  r | r | r | c | c} 
				
				\hline
				\emph{MI}  & \emph{GINsim repository URI} & \emph{Model name} 
				& $N$     
				& $N_i$       & $N_m$    & $r_i$             & $r_m$             \\ \hline
				B1&\url{http://ginsim.org/node/225}  & TCR and TLR5 
				& 128     & 107         & 95       & 0.836         & 0.742         \\ \hline
				B2& \url{http://ginsim.org/node/225} & TCR  
				& 110     & 103         & 91       & 0.936 & 0.827 \\ \hline
				B3& \url{http://ginsim.org/node/87} & T-LGL 
				& 60      & 57          & 52       & 0.95              & 0.867 \\ \hline
				B4& \url{http://ginsim.org/node/173} & MAPK network 
				& 53      & 46          & 39       & 0.868 & 0.736 \\ \hline
				B5& \url{http://ginsim.org/node/225} & TLR5 
				& 42      & 37          & 29       & 0.881 & 0.690  \\ \hline
				B6& \url{http://ginsim.org/node/78} & TCR signalisation 
				& 40      & 31          & 29       & 0.775             & 0.725             \\ \hline
				B7& \url{http://ginsim.org/node/227} & Cell-Fate Decision in Response to Death Receptor Engagement (i) 
				& 33 & 27     & 25       & 0.818 & 0.758 \\ \hline
				B8& \url{http://ginsim.org/node/191} & Molecular Pathways Enabling Tumour Cell Invasion and Migration
				& 32      & 32          & 31       & 1                 & 0.969           \\ \hline
				B9& \url{http://ginsim.org/node/227} & Cell-Fate Decision in Response to Death Receptor Engagement(ii) 
				& 28      & 25          & 20       & 0.893 & 0.714 \\ \hline
				B10& \url{http://ginsim.org/node/97} & Drosophila Wg Signaling pathway 
				& 26      & 23          & 4        & 0.885 & 0.154 \\ \hline
				B11& \url{http://ginsim.org/node/144} & Drosophila Hh Signaling pathway 
				& 24      & 23          & 9        & 0.958 & 0.375             \\ \hline
				B12& \url{http://ginsim.org/node/126} & Drosophila SPATZLE Processing pathway 
				& 24      & 21          & 4        & 0.875             & 0.167 \\ \hline
				B13& \url{http://ginsim.org/node/102} & Drosophila FGF Signaling pathway 
				& 23      & 22          & 8        & 0.957 & 0.348 \\ \hline
				B14& \url{http://ginsim.org/node/39} & ERBB receptor-regulated G1/S transition model 
				& 20      & 15          & 13       & 0.75              & 0.65              \\ \hline
				B15& \url{http://ginsim.org/node/160} & Drosophila VEGF Signaling pathway 
				& 18      & 18          & 8        & 1                 & 0.444 \\ \hline
				B16& \url{http://ginsim.org/node/35} & Budding yeast cell cycle 
				& 18      & \multicolumn{2}{c|}{17} & \multicolumn{2}{c}{0.944}                   \\ \hline
				B17& \url{http://ginsim.org/node/31} & Drosophila cell cycle 
				& 14      & 14          & 12       & 1                 & 0.857 \\ \hline
				B18& \url{http://ginsim.org/node/152} & Drosophila Toll Signaling pathway 
				& 11      & 10          & 9        & 0.909 & 0.818 \\ 
				\hline
				B19& \url{http://ginsim.org/node/37} & Fission Yeast Cell Cycle 
				& 10      & 9           & 9        & 0.9               & 0.9               \\ \hline
				B20& \url{http://ginsim.org/node/69} & DV boundary formation of the Wing imaginal disc 
				& 10      & 8           & 8        & 0.8               & 0.8               \\ \hline
				B21& \url{http://ginsim.org/model/C\_crescentus} & Asymmetric Cell Division in Caulobacter Crescentus g2b 
				& 9       & \multicolumn{2}{c|}{7}  & \multicolumn{2}{c}{0.778}                   \\ 
				\hline
				B22& \url{http://ginsim.org/node/21} & Budding yeast cell cycle 
				& 9       & \multicolumn{2}{c|}{7}  & \multicolumn{2}{c}{0.778}                  \\ \hline
				B23& \url{http://ginsim.org/node/214} & miR-9 and timing of neurogenesis 
				& 6       & \multicolumn{2}{c|}{2}  & \multicolumn{2}{c}{0.333}                    \\ \hline
				B24& \url{http://ginsim.org/model/C\_crescentus} & Asymmetric Cell Division in Caulobacter Crescentus g2a 
				& 5       & \multicolumn{2}{c|}{1}  & \multicolumn{2}{c}{0.2}                 \\ \hline
				M1&\url{http://ginsim.org/model/tcell-checkpoint-inhibitors-tcla4-pd1} & Hernandez\_TcellCheckPoints\_13april2020 
				& 218 & 201        & 136       & 0.922 & 0.623 \\ \hline
				M2&\url{http://ginsim.org/node/229} & TCR\_REDOX\_METABOLISM\_2019\_07\_26  
				& 133 & 126        & 122       & 0.947 & 0.917 \\ \hline
				M3&\url{http://ginsim.org/node/178} & vcpwt23h  
				& 107 & 93         & 59        & 0.869 & 0.551 \\ \hline
				M4&\url{http://ginsim.org/node/185} & Frontiers\_Th\_Fullannotated   
				& 103 & 97         & 52        & 0.941 & 0.505 \\ \hline
				M5& \url{http://ginsim.org/model/SP} & SP\_6cells 
				& 102 & \multicolumn{2}{|c}{16} & \multicolumn{2}{|c}{0.157}                  \\ \hline
				M6& \url{http://ginsim.org/node/194} & Flobak\_FullModel\_S2\_Dataset 
				& 83  & \multicolumn{2}{|c|}{79} &\multicolumn{2}{|c}{ 0.951}                   \\ \hline
				M7& \url{http://ginsim.org/node/79} & Th\_differentiation\_full\_annotated\_model 
				& 71  & 69         & 42        & 0.972 & 0.592 \\ \hline
				M8& \url{http://ginsim.org/node/234}  & Cacace\_Tdev\_2nov2019 
				& 61  & 60         & 58        & 0.983 & 0.95 \\ \hline
				M9& \url{http://ginsim.org/model/drosophila\_mesoderm} & DrosoMesoLogModel 
				& 57  & 57         & 11        & 1                 & 0.192 \\ \hline
				M10& \url{http://ginsim.org/node/69} & ap\_boundary  
				& 56  & 56         & 50        & 1                 & 0.893 \\ \hline
				M11& \url{http://ginsim.org/model/EMT\_Selvaggio\_etal} & Selvaggio\_etal\_2020 
				& 56  & 56         & 43        & 1                 & 0.768 \\ \hline
				M12& \url{http://ginsim.org/node/229} & TCR\_REDOX\_METABOLISM\_2019\_07\_26\_reduced\_0 
				& 53  & 51         & 50        & 0.962 & 0.943 \\ \hline
				M13& \url{http://ginsim.org/node/21}  & coupled\_budding\_yeast\_CC 
				& 50  & 49         & 41        & 0.98              & 0.82              \\ \hline
				M14& \url{http://ginsim.org/node/180} & Mast cell activation  
				& 48 &\multicolumn{2}{|c|} {35}& \multicolumn{2}{|c}{0.729}  \\ \hline
				M15& \url{http://ginsim.org/node/25} & core\_engine\_budding\_yeast\_CC 
				& 39  & 37         & 31        & 0.948 & 0.794 \\ \hline
				M16& \url{http://ginsim.org/model/sex\_determination\_chicken} & primary\_sex\_determination\_2
				& 37  & 37         & 14        & 1                 & 0.378 \\ \hline
				M17& \url{http://ginsim.org/node/79} & Th\_differentiation\_reduced\_model 
				& 36  & 35         & 21        & 0.972 & 0.583 \\ \hline
				M18& \url{http://ginsim.org/node/188} & Bladder\_Model 
				& 35  & 34         & 28        & 0.971 & 0.8               \\ \hline
				M19& \url{http://ginsim.org/node/216} & Collombet\_model\_Bcell\_Macrophages\_PNAS\_170215 
				& 34  & 34         & 33        & 1                 & 0.971 \\ \hline
				M20& \url{http://ginsim.org/node/96} & EGF\_\_Pathway\_12Jun2013\_0 
				& 34  & 32         & 15        & 0.941 & 0.441 \\ \hline
				M21& \url{http://ginsim.org/node/183} & Senescence\_Onset\_G1\_S 
				& 30  & 30 & 14        & 1    & 0.467 \\ \hline
				M22& \url{http://ginsim.org/model/eggshell\_patterning} & mechanistic\_cellular 
				& 24  & 23         & 12        & 0.958 & 0.5               \\ \hline
				M23& \url{http://ginsim.org/node/41} & Th\_17 
				& 21  & 21         & 18        & 1                 & 0.857 \\ \hline
				M24& \url{http://ginsim.org/model/sex\_determination\_mammals} & primary\_sex\_determination\_1
				& 19  & 19         & 12        & 1 & 0.632 \\ \hline
				M25& \url{http://ginsim.org/node/109} & JakStat\_\_Pathway\_12Jun2013
				& 19  & 19         & 7         & 1 & 0.368 \\ \hline
				M26& \url{http://ginsim.org/model/SP} & SP\_1cell 
				& 19  & 18         & 17        & 0.947 & 0.895 \\ \hline
				M27& \url{http://ginsim.org/node/89} & Dpp\_\_Pathway\_11Jun2013 
				& 18  & 18         & 10        & 1                 & 0.556 \\ \hline
				M28& \url{http://ginsim.org/node/29} & exit  
				& 16  & 16         & 13        & 1                 & 0.812            \\ \hline
				M29& \url{http://ginsim.org/model/sex\_determination\_chicken} & full\_network 
				& 15  & 15         & 13        & 1                 & 0.866 \\ \hline
				M30& \url{http://ginsim.org/node/194} & Flobak\_ReducedModel\_S3\_Dataset  
				& 14  & 14         & 13        & 1                 & 0.928 \\ \hline
				M31& \url{http://ginsim.org/node/26} & MCP\_budding\_yeast\_CC 
				& 12  & 12         & 8         & 1                 & 0.667 \\ \hline
				M32& \url{http://ginsim.org/node/115} & Notch\_\_Pathway\_12Jun2013 
				& 16  & 16         & 5         & 1                 & 0.3125            \\ \hline
				M33& \url{http://ginsim.org/node/220} & reduced\_network\_0   
				& 10  & 10         & 8         & 1                 & 0.8               \\ \hline
				M34& \url{http://ginsim.org/model/eggshell\_patterning} & phenomenological\_cellular 
				& 8   & 8          & 2         & 1                 & 0.25              \\ \hline
				M35& \url{http://ginsim.org/node/82} & gapB 
				& 7   & 7          & 6         & 1                 & 0.857 \\ \hline
				M36& \url{http://ginsim.org/node/82} & gapD 
				& 7   & 7          & 6         & 1                 & 0.857 \\ \hline
				M37& \url{http://ginsim.org/node/50} & Trp\_reg 
				& 6   & 6          & 5         & 1                 & 0.833 \\ 
			\end{tabular}
		}
	\end{center}	
	\caption{Application of BBE to BNs from the GINsim model repository.}
	\label{table with results} 
\end{table}

\section{Refined initial partitions for the  selected case studies}\label{sec:appendix_casestudies}

In Section~\ref{casestudies}, we studied how BBE affects attractor analysis of three selected case studies.
It is remarkable that attractor identification was infeasible for the largest TCR-TLR5 BN, whereas we identified its attractors in 30 seconds in its maximal reduction.
%
However, the attractors identified 
may not be all the attractors of interest for the BN. Our crucial hypothesis is that one can specify alternative initial partitions that preserve more or discard some irrelevant attractors. 
We also expect that the reduction ratio of these alternative initial partitions lies between that of the ID and that of the maximal reduction ($r_m$, $r_i$).

\paragraph{Configuration} 
In Sections~\ref{sec:merged}~,~\ref{sec:MAPK}~and~\ref{sec:TLGL}~, we provide a detailed description of the initial partitions that lead to the \emph{refined reduced models}. 

\paragraph{Results}

The results of the refined MAPK, the refined merged TCR-TLR, and the refined T-LGL reduced models are summarized in the following Table~\ref{rt1}~.  We present the number of variables (\emph{Size}), the number of \emph{Attractors}, and the time needed for reduction (\emph{Reduction (s)}) and attractor identification (\emph{Analysis~(s)}).

\begin{table}[h]
	\centering
	\scalebox{0.9}{
		\begin{tabular}{c|ccc|cccc}
			\hline
			\emph{Model} &   \multicolumn{3}{c|}{\emph{Original model}}     &  \multicolumn{4}{c}{\emph{Refined 
					Reduced model}} \\
			\hline
			& \emph{Size}& \emph{Attractors} & \emph{Analysis (s)}&\emph{Reduction (s)} &\emph{Size}& \emph{Attractors} & \emph{Analysis (s)}\\
			\hline
			MAPK Network	& 53 &40 &	16.501	& 1.202      &42  &40  & 12.115\\
			\hline
			T-LGL &60 &264 &123.431 &1,816 &56 &120 &55.049\\
			\hline
			TCR-TLR merged  &128 &\multicolumn{2}{c|}{---\emph{Time Out}---} &2.096 & 98 & 8&9349.577\\
		\end{tabular}
	}
	\caption{The results of 3 case studies for the original and the refined reduced BNs}
	\label{rt1}
\end{table}

The refined reduced MAPK network consists of 42 variables but preserves all attractors in the original model. 
Notably, the reduced model has $\mathit{ 80\%}$ the size of the original. The refined reduced T-LGL results from the original after specifying two input variables in the same block of the initial partition. The merging of these two input variables is an immediate result of~\cite{zhang2008network} and discards 144 attractors which are irrelevant for their analysis. Last but not least, the refined reduced TCR-TLR merged has 98 variables and 8 attractors-more than the maximal reduction of Table~\ref{rt}~. Note again that the attractor identification in the original BN is intractable.

\paragraph{Interpretation} Overall, Table~\ref{rt1} illustrates the possibility of analyzing large BNs by defining alternative initial partitions than the two considered in Section~\ref{sec:42}. Alternative reductions may provide fruitful insights and identify crucial properties of the underlying system. The size of the refined reduced model lies between the size of the input-distinguished and the maximal reduced model in all three models. 

The initialization of Algorithm \ref{algorithm} provides also a framework to specify desires and limitations. Desires refer to the preserved properties with respect to the original model whereas limitations refer to variable perturbation. If such a variable get merged then its perturbation will indicate subsequent perturbation to all the variables that belong to its class. Consequently, variables that are amenable to perturbation, should be kept in singleton blocks of the initial partition.
To this end, we can construct empirical initial partitions that (i) preserve attractors, (ii) discard some of them, or (iii) isolate in singleton blocks variables which are amenable to perturbation.









\subsection{T Cell and Toll-like Receptor (TCR-TLR) merged signalling BN}\label{sec:merged}

In this Section, we exploit the results from the maximal and the ID reduction to obtain two refined reduced BNs of the TCR-TLR merged BN. This BN refers to the T cell receptors and their responsibility for the activation of T cell (\cite{rodriguez2019cooperation}, Fig.~\ref{merged}). The authors generated logical models for the TCR and the TLR5 signalling pathways, and merged them by considering their cross interactions. The original model contains 128 variables, fact that renders its analysis intractable. In order to experimentally validate the correctness of their new merged BN, they considered asynchronous update schema and performed reduction with absorption~\cite{naldi2011dynamically}. Absorption does not guarantee
preservation of all asymptotic dynamics. It has been proven~\cite{veliz2011reduction,naldi2011dynamically,saadatpour2013reduction} that preserves only steady states and may cause spurious cyclic attractors when applied to asynchronous dynamics. When applied to synchronous dynamics, this method may also degenerate cyclic attractors. BBE-reduction maintain the lengths of the preserved attractors according to Theorem~\ref{th:attractorPreservation}. 

\paragraph{ID reduction}

The application of ID reduction to the merged model resulted in 10 equivalence classes displayed in Fig.~\ref{idequivalentclasses}~. We also display them in Fig.~\ref{merged} with different colors for each class: Backward equivalent variables are represented with colored boxes, and colored boxes that belong to the same equivalence class have the same background. Variables in white background belong to singleton classes. The ID reduced BN is still huge (116 variables) and the attractor identification is intractable.

\begin{figure}[h]
	\centering
	\scalebox{0.9}{
		\begin{tabular}{cc}
			$\mathit{\{IRAK4}$, $\mathit{PIK3AP1}\}$&  $\mathit{\{GRAP2}$, $\mathit{MAP4K1}\}$\\
			$\mathit{\{ TICAM1}$, $\mathit{MyD88}\}$& $\mathit{\{MKNK1}$, $\mathit{RPS6KA5}\}$\\ 
			$\mathit{\{ Foxo1}$, $\mathit{BAD}$, $\mathit{GSK3B}$, $\mathit{CDKN1A}\}$& $\mathit{\{MAPKAPK2}$, $\mathit{mTOR}\}$\\
			$\mathit{\{MAP2K3}$, $\mathit{MAP2K7}\}$& $\mathit{\{Camk2}$, $\mathit{Camk4}\}$\\
			$\mathit{\{Cyc1}$, $\mathit{CTNNB1}\}$& $\mathit{\{DAG}$, $\mathit{IP3}\}$\\
		\end{tabular}
	}
	\caption{The equivalence classes of the input-distinguished reduction}
	\label{idequivalentclasses}
\end{figure}

\begin{figure}
	\centering
	\begin{tabular}{cc}
		\includegraphics[scale=0.25]{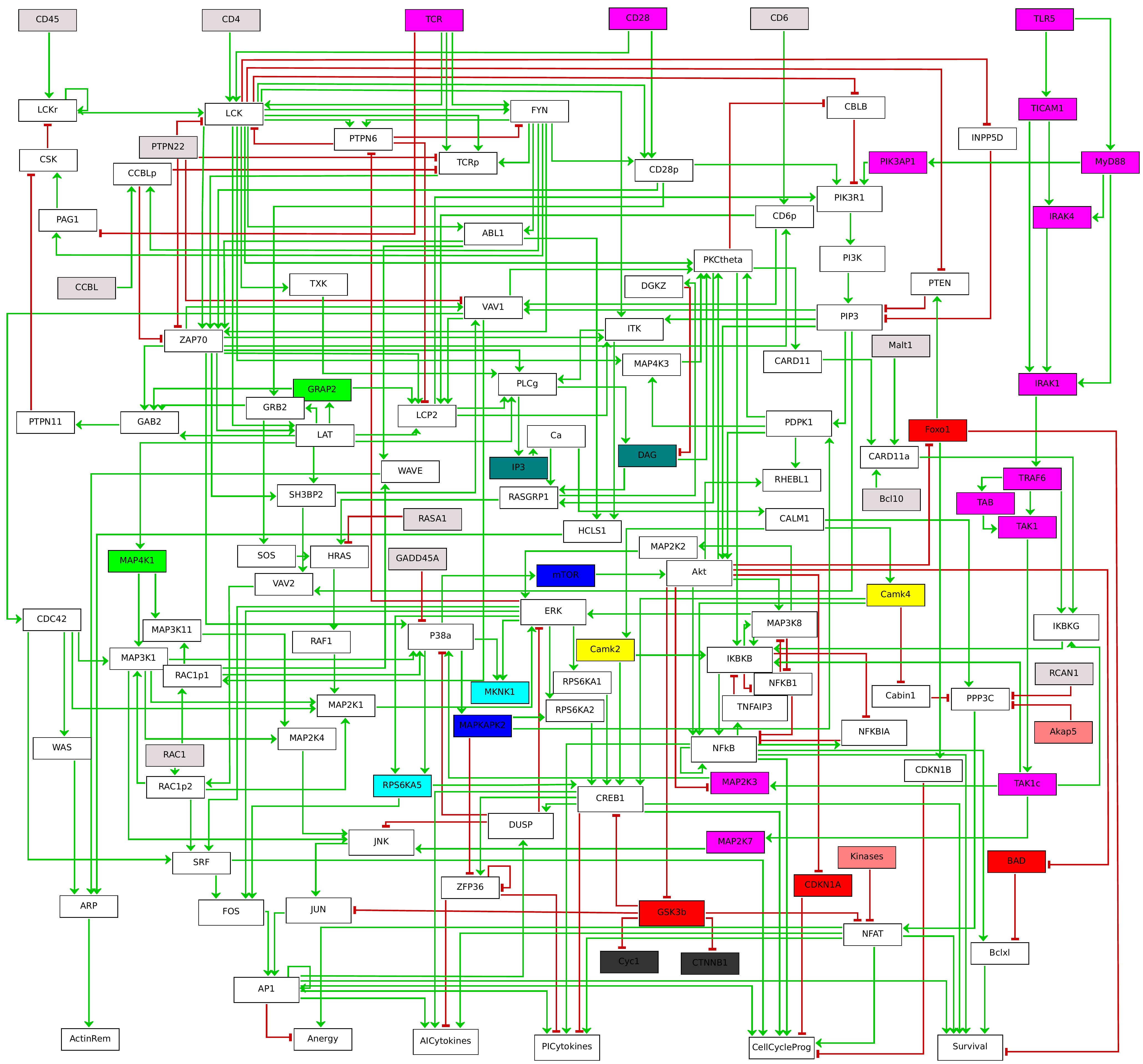}
		\\
		\hline
		\includegraphics[scale=0.25]{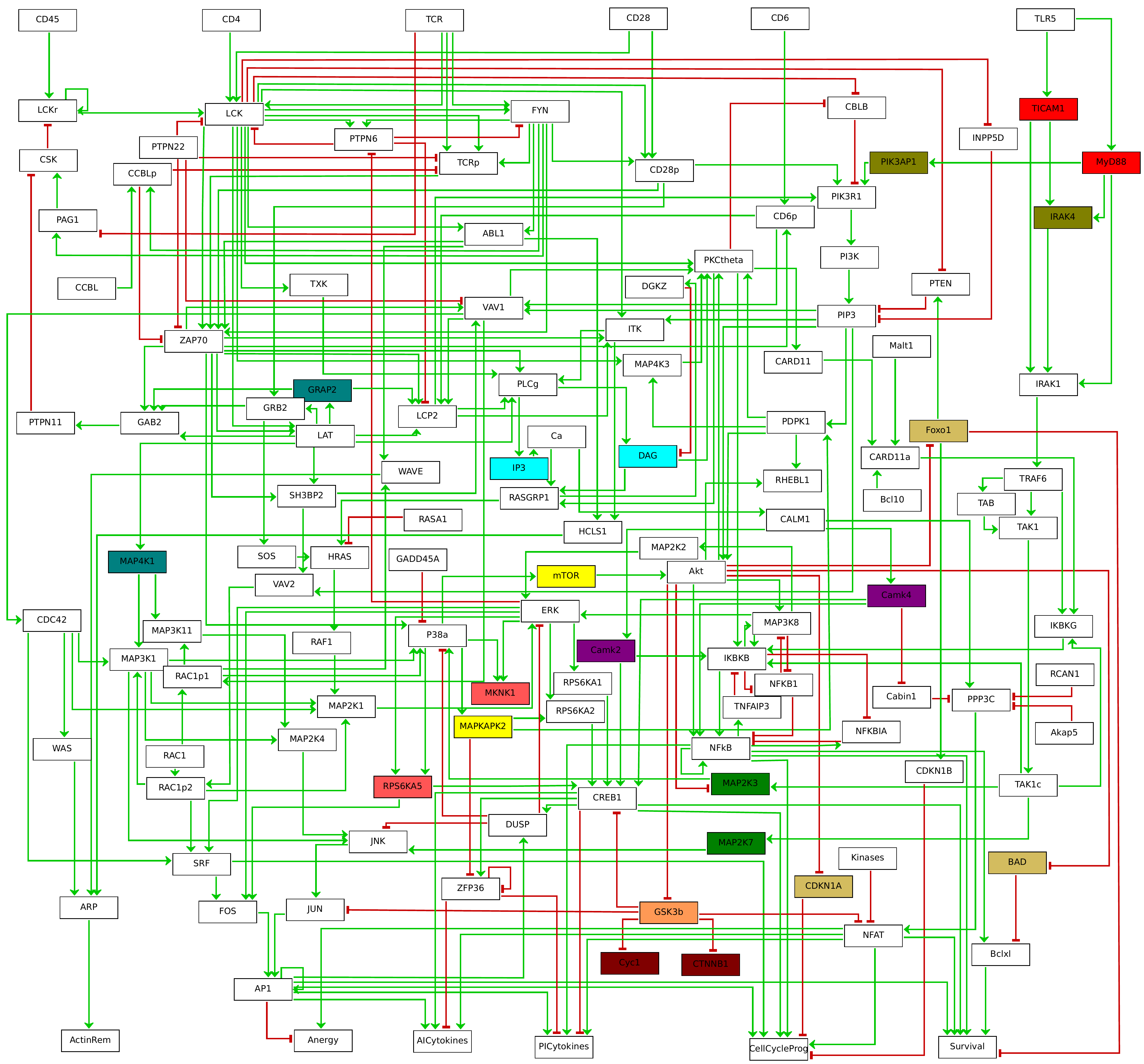}
	\end{tabular}
	\caption{Up: The  TCR-TLR merged BN with input-distinguishing BBE-variables having the same background color.Bottom: The  TCR-TLR merged BN with maximal-reduction BBE-variables having the same background color. Variables with white backround belong to singleton classes.}
	\label{merged}
\end{figure}

\paragraph{Maximal reduction}

However, the maximal reduced BN contains $\mathit{95}$ variables and the attractor identification is feasible in $\mathit{29.336}$ seconds. Fig.~\ref{merged} displays the following BBE-equivalence classes with different colors:

\begin{figure}[h]
	\centering
	\scalebox{0.8}{
		\begin{tabular}{cc}
			*$\mathit{\{CD45}$, $\mathit{CD4}$, $\mathit{CD6}$, $\mathit{CCBL}$,  &***$\mathit{\{TCR}$, $\mathit{CD28}$, $\mathit{TLR5}$, $\mathit{TICAM1}$, $\mathit{PIK3AP1}$, \\
			$\mathit{PTPN22}$, $\mathit{Malt1}$, $\mathit{Bcl10}$, $\mathit{RCAN1}$, & $\mathit{MyD88}$, $\mathit{IRAK1}$ $ \mathit{TRAF6}$, $\mathit{TAB}$, $\mathit{TAK1}$,\\
			$\mathit{RAC1}$, $\mathit{RASA1}$, $\mathit{GDD45A}\}$ & $\mathit{TAK1c}$, $\mathit{IRAK4}$, $\mathit{MAP2K3}$, $\mathit{MAP2K7}\}$ \\
			$\mathit{\{MAPKAPK2}$, $\mathit{mTOR}\}$&** $\mathit{\{Kinases}$, $\mathit{Akap5}\}$\\ 
			$\mathit{\{MKNK1}$, $\mathit{RPS6KA5}\}$& $\mathit{\{ Foxo1}$, $\mathit{BAD}$, $\mathit{GSK3B}$, $\mathit{CDKN1A}\}$\\
			$\mathit{\{DAG}$, $\mathit{IP3}\}$& $\mathit{\{GRAP2}$,  $\mathit{MAP4K1}\}$\\
			$\mathit{\{Camk2}$, $\mathit{Camk4}\}$& $\mathit{\{Cyc1}$, $\mathit{CTNNB1}\}$\\
		\end{tabular}
	}
	\caption{The equivalence classes of the maximal reduction}
	\label{maxequivalentclasses}
\end{figure}

The maximal reduction splits input variables into 3 classes: The first class (* in Fig.~\ref{maxequivalentclasses}) 
contains all variables with stable update function which equals to \emph{true}. The second class (**)  
contains all variables with stable update functions that equals to \emph{false}. The third class (***) 
contains all variables with identity update function ($\mathit{TCR}$, $\mathit{CD28}$, $\mathit{TLR5}$) and all variables BBE-equivalent variables with them. 


\paragraph{Refined Reduction}

We now consider two alternative initial partitions, initialize Algorithm~\ref{algorithm} with them, and gain deeper insights in the underlying model. The first initial partition is constructed as follows:

\begin{itemize}
	\item two of the inputs with identity function, $\mathit{TCR}$ and $\mathit{CD28}$, are kept in singleton blocks,
	\item variables with stable function $\mathit{true}$ belong to one block,
	\item variables with stable function $\mathit{false}$ belong to another block, and
	\item we define one more block containing $\mathit{TLR5}$ and all variables that belong to its equivalent class in the case of maximal reduction  
	i.e. $\{\mathit{TLR5}$, $\mathit{TICAM1}$, $\mathit{PIK3AP1}$, $\mathit{MyD88}$, $\mathit{IRAK1}$, $\mathit{TRAF6}$, $\mathit{TAB}$, $\mathit{TAK1}$, $\mathit{TAK1c}$, $\mathit{IRAK1}$, $\mathit{MAP2K3}$, $\mathit{MAP2K7}\}$-the blue chain of variables (Fig.~\ref{merged}~top). 
\end{itemize}

We call the reduced BN obtained by the first initial partition \emph{refined reduced model}~$\emph{I}$.

The second initial partition that we consider is similar to the first but one subtle differentiation: the variable $\mathit{MAP2K3}$ is kept in singleton block. The reduced BN that results from this initial partition, is called \emph{refined reduced model}~$\emph{II}$. The results of our study in this model is summarized in Table~\ref{resTCRTLR}~. We present the number of variables (\emph{Size}), the number of \emph{Attractors}, and the time needed for reduction (\emph{Reduction (s)}) and attractor identification (\emph{Analysis~(s)}).

\begin{table}[h]
    
	\centering
	\scalebox{0.95}{
		\begin{tabular}{c|c|c|c|c}
		    \hline
			\emph{Model} &\emph{Size} & \emph{Attractors} & \emph{Analysis (s)}& \emph{Reduction (s)} \\
			\hline
			\emph{Original} &128	& \multicolumn{2}{c|}{---\emph{Time Out}---}	&-	\\
			\hline
			\emph{Input-distinguished}&116	& \multicolumn{2}{c|}{---\emph{Time Out}---} &	2.058\\
			\hline
			\emph{Refined Reduced II}&98	&8	&9349.577 &2.096\\
			\hline
			\emph{Refined Reduced I}&97	&8	&1103.912 &1.833	\\
			\hline
			\emph{Maximal} &95	&2	& 29.336 &1.958 \\
		\end{tabular}
	}
	\caption{
		The results of the TCR-TLR merged BN for different reduced versions of the original model.}
	\label{resTCRTLR}
\end{table}

\paragraph{Interpretation} As we have seen before, attractor identification is intractable in the original and the ID-reduced BN whereas we can identify two attractors in the case of maximal reduction. Attractor identification in the refined reduced models is still feasible wherein we find more attractors than in the case of maximal reduction. We should highlight that reducing the TCR-TLR merged BN by just one variable decreases attractor identification time by several orders of magnitude (see \emph{Analysis time} in the case of \emph{Refined Reduced Models}). The computation of the BBE-reduced BN took less than $\mathit{3}$ seconds in all cases. To sum up, initializing Algorithm~\ref{algorithm} with alternative initial partitions derived from the results of the maximal and ID reduction enables us to explore richer behaviours of the original model.

\subsection{Mitogen-Activated Protein Kinase (MAPK) network}\label{sec:MAPK}

In this Section, we obtain a refined reduced model of the MAPK BN using results from the maximal and the ID reduced model.  The original MAPK BN~\cite{grieco2013integrative} consists of $\mathit{53}$ variables,   $\mathit{4}$ inputs and has $\mathit{40}$ attractors.  
We performed ID BBE-reduction and found the following equivalence classes: $\mathit{\{JNK,p38\} }$, $\mathit{\{SMAD}$, $\mathit{TAK1\}} $, $\mathit{\{ATF2, JUN}, \mathit{MAX}$, $\mathit{PPP2CA\}}$, $\mathit{\{ELK1}$, $\mathit{MSK\}}$, $\mathit{\{RSK, SPRY\}}$. The classes are displayed in the up part of Fig.~\ref{MAPK}~: Backward equivalent variables are represented with colored boxes and colored boxes that belong to the same class have the same background. Variables in white background belong to singleton classes. 

\paragraph{ID Reduction}
The ID reduced MAPK BN has $\mathit{46}$ variables and $\mathit{40}$ attractors. 
Note that all attractors are preserved. This is a trivial consequence from the fact that (i) the number of attractors is the same, and (ii) the STG of the reduced BN is a subgraph of the STG of the original BN (isomorphism Lemma~\ref{lemma:isomorphism}). The BBE-reduction is consistent with~\cite{cardelli2017maximal}, where the authors transformed the BN to a system of ordinary differential equations, and reduced with backward differential equivalence. 

\begin{figure}
	\centering
	\begin{tabular}{cc}
		\includegraphics[scale=0.30]{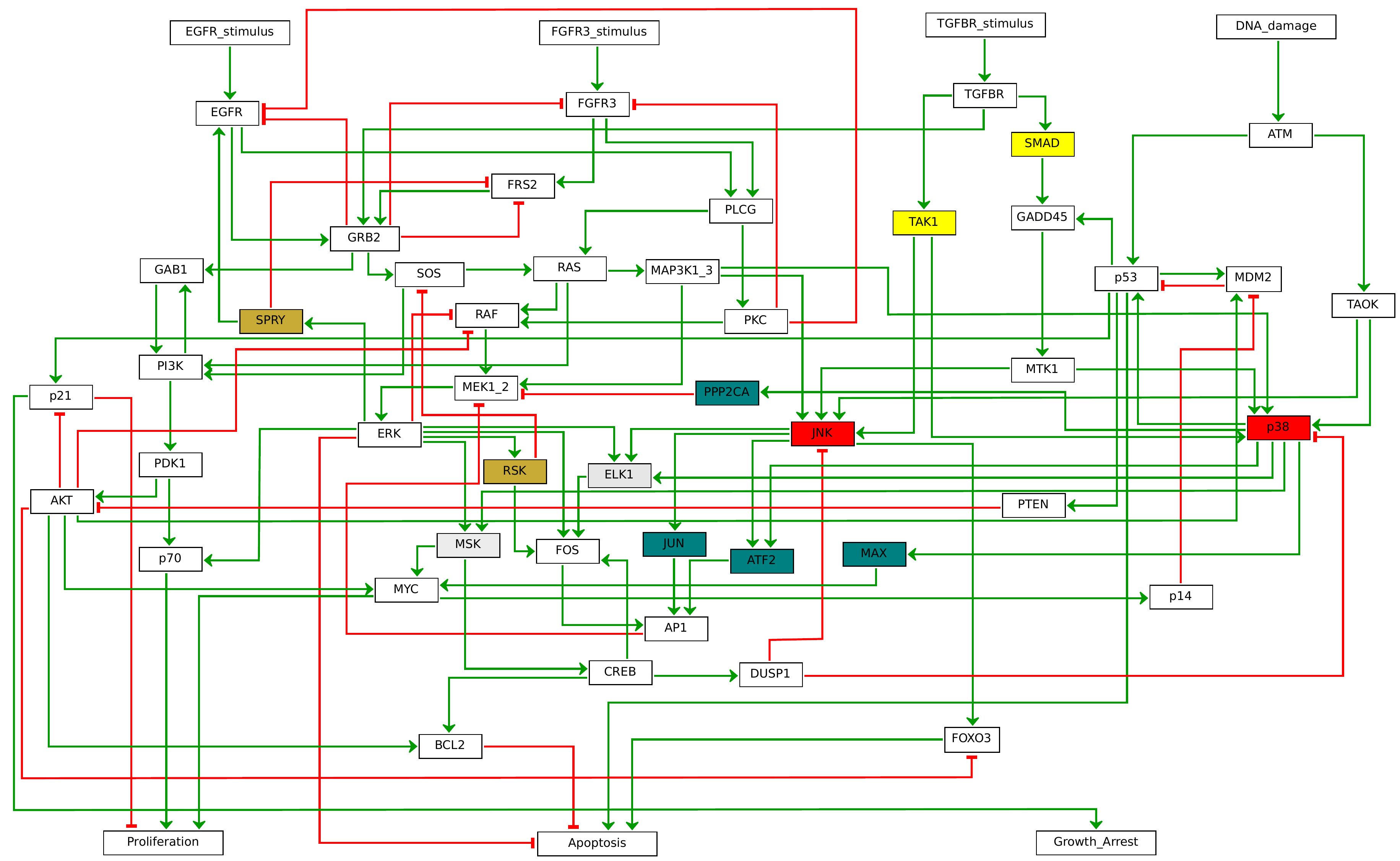}
		\\
		\hline
		\\
		\includegraphics[scale=0.30]{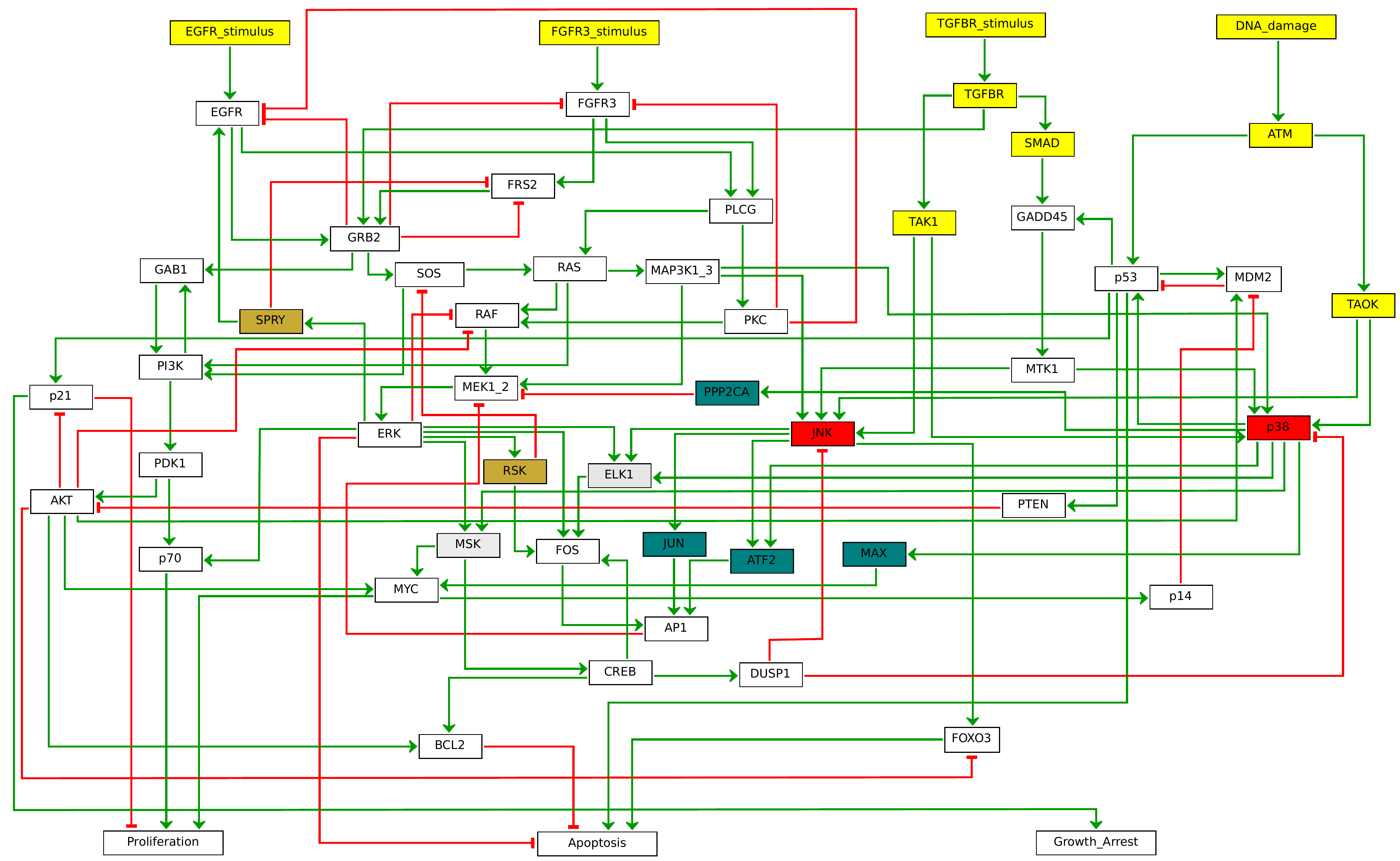}
	\end{tabular}
	\caption{Up: The MAPK BN with input-distinguishing BBE-variables having the same background color.
		Bottom: The MAPK BN with maximal-reduction BBE-variables having the same background color. Variables with white backround belong to singleton classes.}
	\label{MAPK}
\end{figure}

\paragraph{Maximal Reduction} The bottom part of Fig.~\ref{MAPK} displays the MAPK BN and its equivalence classes after performing the maximal reduction. Algorithm~\ref{algorithm} found the following equivalence classes: $\mathit{\{JNK,p38\} }$, $\mathit{\{SMAD}$, $\mathit{TGFBR}$, $\mathit{ATM} $,  $\mathit{TAOK} $, $\mathit{EGFR\_stimulus} $, $\mathit{FGFR3\_stimulus} $, $\mathit{TGFBR\_stimulus} $,  $\mathit{DNA\_damage} $, $\mathit{TAK1\}} $, $\mathit{\{ATF2}$, $\mathit{JUN}$, $\mathit{MAX}$, $\mathit{PPP2CA\}}$, $\mathit{\{ELK1}$, $\mathit{MSK\}}$, $\mathit{\{RSK, SPRY\}}$. BoolSim computes $\mathit{17}$ attractors in 
this case which means that the number of attractors is not preserved. In contrast with \cite{naldi2011dynamically}, the preserved attractors are pure in the original network in the sense that the isomorphism of Lemma~\ref{lemma:isomorphism} translates the attractors of the reduced to the original BN.

\paragraph{Refined Reduction}
Based on observations gained from the maximal and the ID reduction, we specify the following partition: $\{\mathit{EGFR\_stimulus}\}$, $\{\mathit{FGFR3}$ $\mathit{\_stimulus}\}$, $\{\mathit{TGFBR\_stimulus}$,
$\mathit{TGFBR} $, $\mathit{TAK1} $, $\mathit{SMAD}\}$, $\{\mathit{DNA\_damage}$, $\mathit{ATM} $,  $\mathit{TAOK}\}$, and one block containing all the remaining variables. In other words, we keep 2 of the inputs ($\{\mathit{EGFR\_stimulus}\}$, $\{\mathit{FGFR3\_stimulus}\}$) still in singleton sets, while we define two more blocks with each one containing one input and the input's BBE-variables found in the maximal reduction. %
We expect that the reduced BN, which now contains $\mathit{42}$ variables ($\mathit{79,25\%}$ of the original size), preserves more properties. Indeed, the refined reduced BN has all $\mathit{40}$ attractors of the original BN. The results of our study in this model is summarized in the following Table~\ref{MAPKres}~. We present the number of variables (\emph{Size}), the number of \emph{Attractors}, and the time needed for reduction (\emph{Reduction (s)}) and attractor identification (\emph{Analysis~(s)}).

\begin{table}[h]
	\centering
	\scalebox{0.95}{
		\begin{tabular}{c|c|c|c|c}
		    \hline
			\emph{Model} &\emph{Size} & \emph{Attractors} & \emph{Analysis (s)} & \emph{Reduction (s)} \\
			\hline
			\emph{Original} &53	& 40&16.501	&	-	\\
			\hline
			\emph{Input-distinguished}&	46& 40&	14.480&	0.848\\
			\hline
			\emph{Refined Reduced} &42& 40&12.115&1.202\\
			\hline
			\emph{Maximal}&39&17&2.471&1.018 \\
		\end{tabular}
	}
	\caption{
		The results of the MAPK BN for different reduction versions of the original model.}
	\label{MAPKres}
\end{table}

\subsection{ T cell granular lymphocyte (T-LGL) leukemia BN}\label{sec:TLGL}

T-LGL BN was originally introduced in \cite{zhang2008network} and refers to the disease T-LGL leukemia which features a clonal expansion of antigen-primed, competent, cytotoxic T lymphocytes (CTL). The T-LGL BN is a signalling pathway, constructed empirically through extensive literature review, and determines the survival of CTL. The ID reduction and the maximal reduction are depicted in the top part and the bottom part of Fig.~\ref{TLGL} respectively. The original BN consists of 60 variables, and has 264 attractors. 

\begin{figure}
	\centering
	\begin{tabular}{c}
		\includegraphics[scale=0.30]{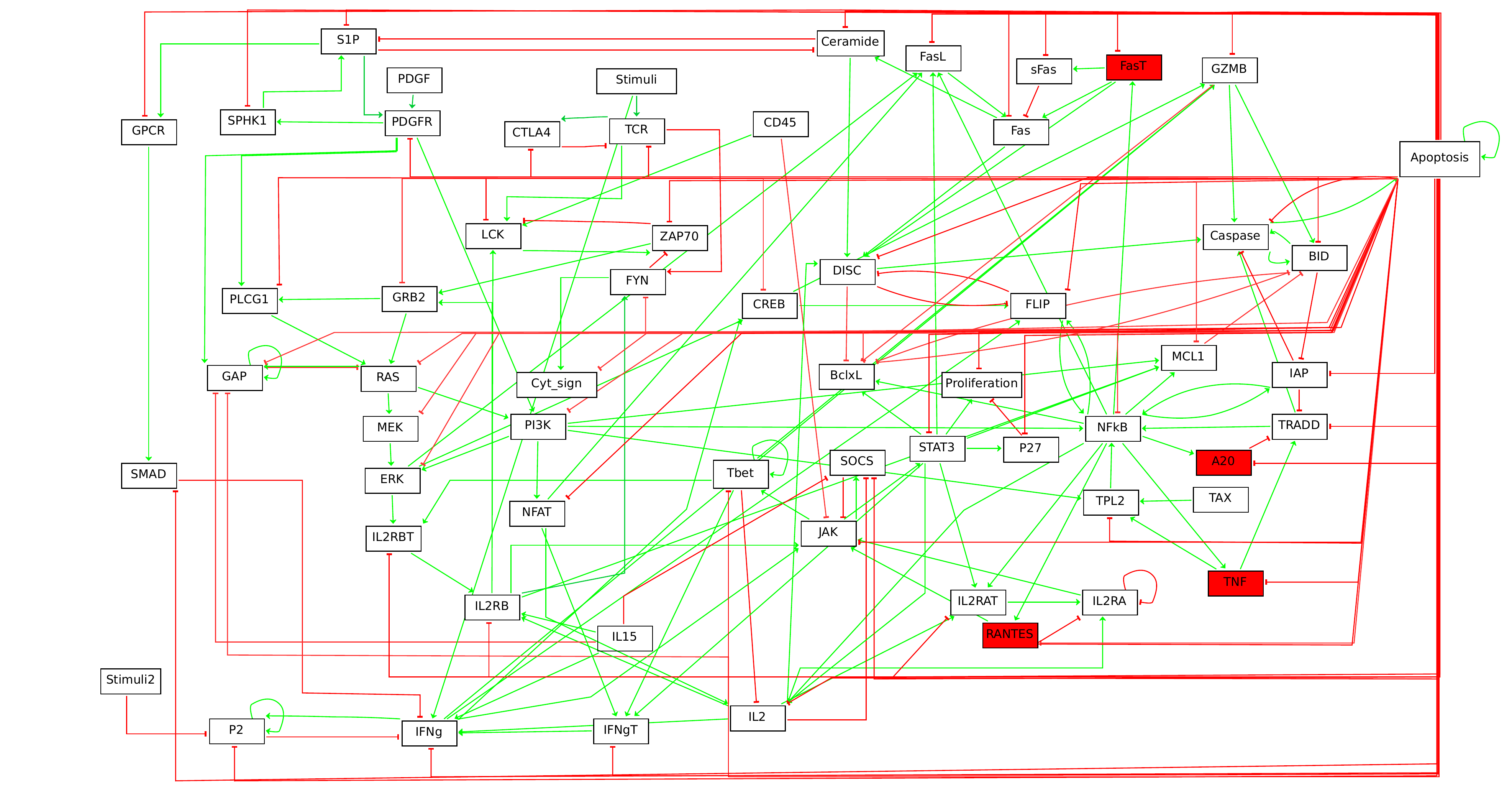}
		\\
		\hline
		\\
		\includegraphics[scale=0.30]{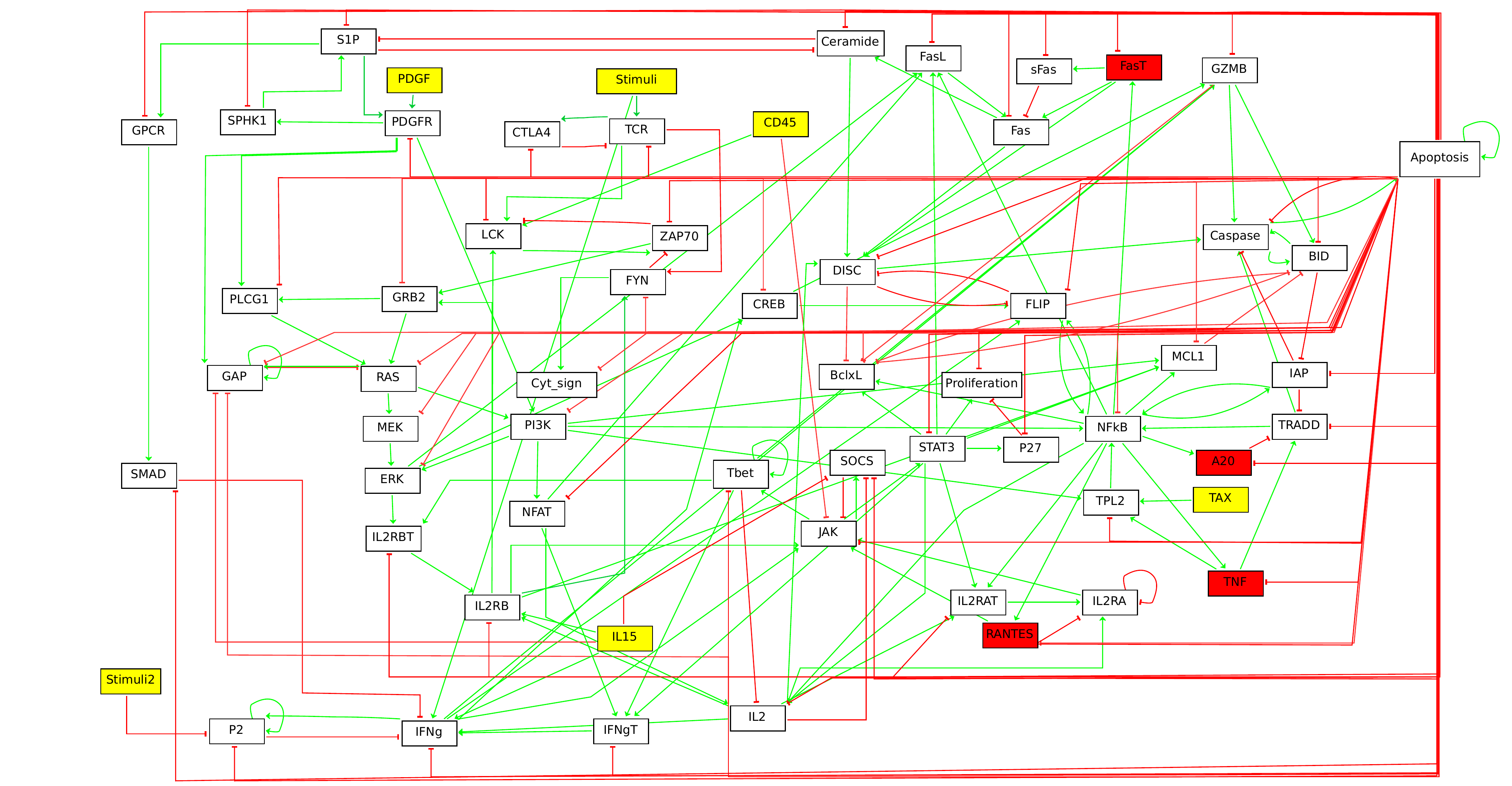}
	\end{tabular}
	\caption{Up: The T-LGL BN with input-distinguishing BBE-variables having the same background color.
		Bottom: The T-LGL BN with maximal-reduction BBE-variables having the same background color. Variables with white backround belong to singleton classes.}
	\label{TLGL}
\end{figure}

In the case of ID reduction, the variables $\mathit{FasT}$, $\mathit{A20}$, $\mathit{TNF}$ and $\mathit{RANTES}$ are BBE-equivalent so we can collapse them into a single variable. The ID reduced BN has 57 variables, and 264 attractors. Since the number of attractors is the same, and the STG of the reduced BN is a subgraph of the STG of the original BN, the asymptotic dynamics are preserved. The bottom part of Fig.~\ref{TLGL} refers to the maximal BBE. In this case, we have two equivalence classes: the one found in ID BBE, and one consisted of all the input variables. 
On the other hand, the maximal reduced BN has 52 variables and 6 attractors. This means that some attractors are lost.

In \cite{zhang2008network}, the authors considered the asynchronous schema and presented a variable specified analysis. Specifically, their analysis determined which variables are sufficient to induce all of known signalling abnormalities in leukemic T-LGL, which variables are important for the survival of leukemic T-LGL, and which variables are constantly active in leukemic T-LGL. Notably, permanent activation of the variables $\mathit{IL-15}$ and $\mathit{PDGF}$ is sufficient to produce all of the known deregulations and signalling abnormalities. For this reason, we consider as reasonable initial partition one wherein $\mathit{IL-15}$ and $\mathit{PDGF}$ belong to the same block, other input variables belong to singleton blocks, and non-input variables belong to one block. In this case, the refined reduced BN has 56 variables and $120$ attractors. 
In contrast with the maximal reduced and the original BN which have 264 attractors, the BN reduced with this reasonable initial partition discards 144 attractors which are irrelevant for this kind of analysis. The results of our study on this model is summarized in the following Table~\ref{TLGLres}~. We present the number of variables (\emph{Size}), the number of \emph{Attractors}, and the time needed for reduction (\emph{Reduction (s)}) and attractor identification (\emph{Analysis~(s)}).

\begin{table}[h]
	\centering
	\scalebox{0.95}{
		\begin{tabular}{c|c|c|c|c}
			\emph{Model} &\emph{Size} & \emph{Attractors} & \emph{Analysis (s)}& \emph{Reduction (s)} \\
			\hline
			\emph{Original} &60	& 264&123.431&-		\\
			\hline
			\emph{Input-distinguished}&57	& 264	&85.999&0.843		\\
			\hline
			\emph{Refined Reduced} &56	&120	&55.049 &1.816	\\
			\hline
			\emph{Maximal} &52	&6&2.489 &0.999	 \\
		\end{tabular}
	}
	\caption{
		The results of the T-LGL BN for different reduction versions of the original model.}
	\label{TLGLres}
\end{table}



\section{Speed-ups on STG generation in BBE-reduced models} \label{app:speedups}

\paragraph{Hypothesis}
We hope to drastically reduce the time needed for STG generation. Furthermore, we claim that our technique may be utilized for STG visualization since reducing a BN only by one variable results in reducing its corresponding STG by $\mathit{50\%}$.


\paragraph{Configuration}
We reduced the original BN with both ID and maximal BBE-reduction. We observed that PyBoolNet failed to generate the STG of BNs that have more than 25 variables. Hence, we restricted our experiments to all BNs with less variables. PyBoolNet generates the STG within several minutes for BNs between 21 and 25 variables, and within a minute for BNs with up to 20 variables. We did not consider BNs with less than 9 variables since the generation and visualization of the full STG is feasible and computationally costless. We present the results in Table~\ref{rt2}:

\begin{table}[h]
	\centering
	\scalebox{0.75}{
		\begin{tabular}{c|cc|ccc|ccc}
			\hline
			\emph{Model} &   \multicolumn{2}{c|}{\emph{Original model}}     &  \multicolumn{3}{c|}{\emph{Input-distinguished Reduced model}} &  \multicolumn{3}{c}{\emph{Maximal Reduced model}} \\
			\hline
			& \emph{Size}& \emph{STG generation(s)} & \emph{Reduction (s)}&\emph{Size}& \emph{STG generation(s)}& \emph{Reduction (s)}&\emph{Size}& \emph{STG generation(s)} \\
			\hline
			B7~
			&33 &\emph{out of memory}       &0.585	&27	&\emph{out of memory}   &0.608	&25 & \emph{out of memory}\\
			\hline
			B9 
			&28 &\emph{out of memory}       &0.449	&25	&\emph{out of memory}   &0,416	&20 & 52.8 \\
			\hline
			B10 
			&26 &\emph{out of memory}		&0.227  &23 &457  &0.145	&4 & 0.006\\
			\hline
			B11 
			&24 &984       &0.243	&23	&475  &0.207	&9 & 0,280\\
			\hline
			B12 
			&24 &987       &0.349	&21	&102  &0.121	&4 &0.050 \\
			\hline
			B13 
			&23 &455       &0.302	&22	&226  &0.176	&8 &0.164 \\
			\hline
			B14 
			&20 &55.6  	&0.497  &15	&2.11  &0.408 	&13 & 0.302 \\
			\hline
			B15 
			&18 &11.6       &0.209	&18	&11.6   &0,182	&8 &0.007 \\
			\hline
			B16 
			&18 &14.300 & \multicolumn{3}{c|}{---\emph{NO INPUTS}---} &0.449 &17 &6.760\\
			\hline
			B17 
			&14 &0,867      &0.267	&14	&0.867   &0.389	&12 &0.169 \\
			\hline
			B18 
			&11 &0.072       &0.327	&10	&0.064   &0.214	&9 &0.065 \\
			\hline
			B19 
			&10 &0.044       &0.228	&9	&0.016   &0.303	&9 &0.016 \\
			\hline
			B20 
			&10 &0.172   	&0.283  &8 	&0.044  &0.202	&8 &0.044 \\
			\hline
			B21 
			&9 &0.015       & \multicolumn{3}{c|}{---\emph{NO INPUTS}---} &0.279	&7 &0.005 \\
			\hline
			B22 
			&9 &0.025 & \multicolumn{3}{c|}{---\emph{NO INPUTS}---} &0.237 &7 &0.003\\
			\hline
			M9  
			&57 &\emph{out of memory} & 0.791 &57 &\emph{out of memory} &0.260 &11 &0.233 \\ \hline
			M16 
			&37 &\emph{out of memory} & 0.907 &37 &\emph{out of memory} &0.454 &14 &1.360 \\\hline
			M17 
			&36 &\emph{out of memory} &0.413 &35 &\emph{out of memory} &0.516 &21 &136 \\\hline
			M20 
			&34 &\emph{out of memory} &0.364 &32 &\emph{out of memory} &0.383 &15 &2.68 \\\hline
			M21 
			&30 &\emph{out of memory} &0.421 &30 &\emph{out of memory} &0.238 &14 &1.37 \\\hline
			M22 
			&24 &1212 &0.251 &23 &1043 &0.219 &12 &0.172 \\\hline
			M23 
			&21 &130 &0.273 &21 &130 &0.326 &18 &14.6 \\\hline
			M24 
			&19 &31 &0.109 &19 &31 &0.153 &7 &0.463 \\\hline
			M25 
			&19 &28.3 &0.210 &19 &28.3 &0.243 &12 &0.609 \\\hline
			M26 
			&19 &30.1 &0.249 &18 &13.9 &0.356 &17 &6.320 \\\hline
			M27 
			&18 &14.2 &0.161 &18 &14.2 &0.194 &10 &0.260 \\\hline
			M28 
			&16 &3.34 &0.189 &16 &3.34 &0.266 &13 &1.54 \\\hline
			M29 
			&16 &3.15 &0.101 &16 &3.15 &0.096 &5 &0.028 \\\hline
			M30 
			&15 &1.59 &0.187 &15 &1.59 &0.235 &13 &0.303 \\\hline
			M31 
			&14 &0.883 &0.178 &14 &0.883 &0.203 &13 &0.444 \\\hline
			M32 
			&12 &0.156 &0.168 &12 &0.156 &0.137 &8 &0.010 \\\hline
			M33 
			&10 &0.032 &0.098 &10 &0.032 &0.044 &8 &0.007 \\\hline
		\end{tabular}
	}
	\caption{Time needed for model reduction and STG generation of the original and the reduced BN. The running times are coming from one run and the computation of the BBE-reduced BNs take no more than 1 second in the worst cases.} 
	\label{rt2}
\end{table}

\paragraph{Results} PyBoolNet failed to generate the STG of the original B9~\cite{calzone2010mathematical}. This was done within a minute after applying maximal BBE-reduction. For BNs between 20 and 25 variables, our method drastically decreased the STG generation time: The STG of the ID BN needs on average $\mathit{25\%}$ of the time for the generation of the full STG, and the maximal BN needs less than $\mathit{1\%}$ of the time needed for the generation of the full STG. For BNs with less than 20 variables the reduction may be computationally effective in several cases (see B15 and B16 in Table~\ref{rt2}).

\paragraph{Interpretation} We should note that our method (i) may render the analysis of large BN models tractable in many cases (like B9, M9, and M21) and (ii) facilitates STG visualization. BBE-reduction constitutes a useful method for generating pure segments of the original state space. According to the isomorphism Lemma~\ref{lemma:isomorphism}, the STG of the reduced BN constitutes a subgraph of the STG of the original BN resulting from it after the collapse of a BBE-class into one variable component. In other words, our reduction method provides a pure image of the state space of the original BN. We utilize these results in Section~\ref{casestudies} wherein we conduct an attractor based analysis.

\end{document}